\documentclass[10pt,fleqn,aps,prb,twocolumn,showpacs,floatfix]{revtex4-2}
\usepackage[a4paper,margin=1in]{geometry}
\usepackage{amsmath,amssymb,amsthm,mathtools,bm}
\usepackage{mathrsfs}
\usepackage{tensor}
\usepackage{bbm}
\usepackage{graphicx}
\usepackage{url}
\usepackage[utf8]{inputenc}
\usepackage{subfigure}
\usepackage{slashed}
\usepackage{dsfont}
\usepackage{setspace}
\usepackage{wasysym}
\usepackage{makecell}
\usepackage{xcolor}
\usepackage{ulem}
\usepackage{hyperref}
\usepackage[shortlabels]{enumitem}

\newcommand{\Id}{\mathbbm{1}}

\newcommand{\normop}[1]{\left\lVert #1 \right\rVert}

\newcommand{\ket}[1]{|#1\rangle}
\newcommand{\bra}[1]{\langle #1|}
\newcommand{\braket}[2]{\langle #1 | #2 \rangle}

\newcommand{\tr}{\operatorname{tr}}

\newcommand{\I}{\mathbbm{1}}

\newcommand{\cN}{\mathcal{N}}

\newcommand{\eps}{\epsilon}

\theoremstyle{plain}
\newtheorem{proposition}{Proposition}
\newtheorem{lemma}{Lemma}
\newtheorem{corollary}{Corollary}

\theoremstyle{definition}
\newtheorem{assumption}{Assumption}

\theoremstyle{remark}
\newtheorem{remark}{Remark}

\newcommand{\be}{\begin{equation}}
\newcommand{\ee}{\end{equation}}
\newcommand{\bea}{\begin{eqnarray}}
\newcommand{\eea}{\end{eqnarray}}

\begin{document}

\title{Hierarchical entanglement transitions and hidden area-law sectors in quantum many-body dynamics}

\author{Tarun Grover}
\affiliation{Department of Physics, University of California at San Diego, La Jolla, California 92093, USA}
\begin{abstract}
Chaotic many-body dynamics typically generates volume-law entanglement from initially low-entangled states. 
We reveal an intricate, hierarchical entanglement structure in local quantum quenches, both in the canonical purification of locally quenched Gibbs states and in a companion pure-state circuit model. 
In either setting, the full state exhibits a Rényi-index-tuned transition: at long times, \(S_{\alpha>1}\) obeys an area law, while \(S_{\alpha\le 1}\) is volume-law. 
More strikingly, the response linear in the quench strength is carried by only an \(O(1)\)-dimensional dominant Schmidt sector; the corresponding states exhibit their own area-to-volume-law transitions at critical indices \(\alpha_c<1\), implying polynomial-bond-dimension approximability in one dimension. We provide evidence that this hierarchy persists recursively: upon bipartitioning the dominant Schmidt states, their leading Schmidt sectors exhibit analogous structure. We derive the mechanism analytically in the circuit model, prove the \(S_{\alpha>1}\) area law for locally quenched Gibbs states, and support the hierarchy by exact diagonalization of random circuits and locally quenched Gibbs states of chaotic spin chains.
\end{abstract}

\maketitle

\textbf{Introduction.}  The scaling of entanglement entropy provides a useful coarse diagnostic of quantum many-body states: ground states of local Hamiltonians typically satisfy an area law up to possible logarithmic corrections~\cite{srednicki1993,Holzhey94,Calabrese04,hastings2007area}, whereas typical eigenstates at finite energy density, as well as states evolved for time linear in system size, exhibit a volume law~\cite{deutsch1991,srednicki1994chaos,page_1993,bravyi2006lieb,calabrese2005evolution}. This expectation explains both the success of tensor-network methods for many one-dimensional ground-state problems~\cite{White92,White93,verstraete2004density,VerstraeteCirac06,vidal2007classical,Schollwoeck11}  and the difficulty of generic long-time dynamics~\cite{Vidal04,daley2004time,white2004realtime,schuch2008entropy, Orus14,Schollwoeck11, karrasch2012finiteT,barthel2013precise}, despite recent progress with tensor-network and operator-space methods~\cite{leviatan2017quantum,Houschild2018finding,White2018quantumdynamics,Rakovszky2022dissipation,Keyserlingk2022Operator,YiThomas2024Comparing,Artiaco2024Efficient,Angrisani2025classically,rudolph2025pauli,cruz2025quantum,anand2026kpz}. In this work, we show that even in highly chaotic systems, local quantum quenches can hide a low-entanglement sector inside a highly entangled state. This sector contains the entire response linear in the quench strength, even though the full state can have volume-law entanglement. We further show that this is part of a broader hierarchical structure: as one recursively bipartitions the dominant Schmidt states, their entanglement exhibits area-to-volume-law transitions as the Rényi index is varied.

\begin{figure}[t]
\centering
\includegraphics[width=\columnwidth]{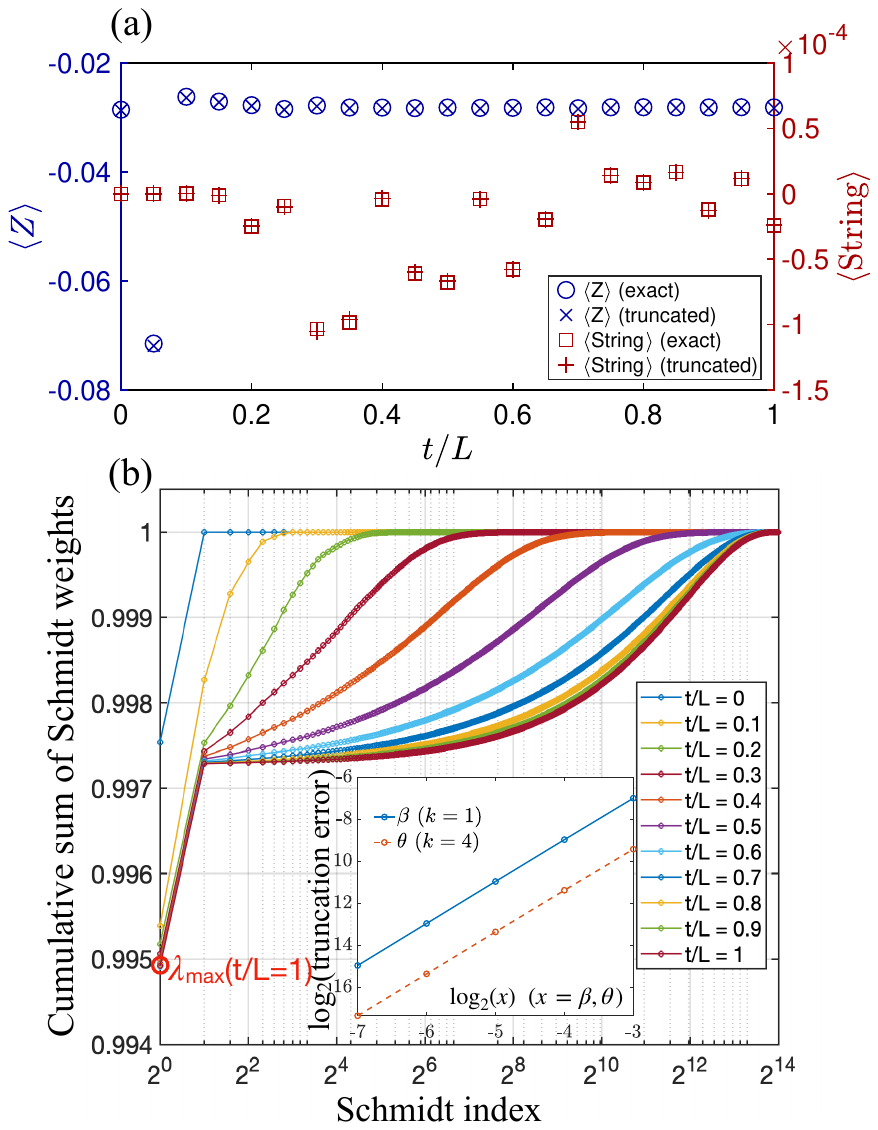}
\caption{(a) Comparison of the exact expectation values with those obtained from a truncated density matrix where we have kept \textit{only} the leading Schmidt vector of the canonical purification of a Gibbs state subjected to a local quantum quench. Here $H = \sum_{i=1}^{L-1} Z_i Z_{i+1}+ \sum_{i=1}^{L} (g X_i + h Z_i) + Z_1/4 - Z_L/4$ with $L = 14, g = 1.1, h = 0.35$. The specific parameters in the Hamiltonian are taken from Ref.~\cite{RodriguezNieva2024quantifying}. The local quench operator is $U_\theta = e^{-i \theta Z_{L/2}}$ with $\theta = 0.5$ and inverse temperature $\beta = 0.1$ (see Eq.\ref{eq:localquench}). (b) The cumulative sum of Schmidt weights as a function of the Schmidt index for various times. The left most point of the curve corresponds to the largest Schmidt eigenvalue $\lambda_{\textrm{max}}$ (e.g., $\lambda_{\textrm{max}}$ at $t = L$ is shown as circled), and the truncation error is the deviation of this sum from unity. The inset shows the truncation error as a function of $\beta$ at $\theta = 0.5$ while keeping single Schmidt state, and as a function of $\theta$ at $\beta = 0.8$ while keeping four Schmidt states. The slope is two on a log-log plot.}
\label{fig:z_string}
\end{figure}

We begin with a surprising observation that motivated this work. Consider subjecting a Gibbs state $\rho_\beta = e^{-\beta H}/Z$ to a local quench of the form 

\be \rho_\beta \to \rho_{\beta,\theta}(t) = e^{i Ht} U^{\dagger}_\theta(x) \rho_{\beta} U_\theta(x) e^{-i Ht} \label{eq:localquench}\ee 

where $U_\theta(x) = e^{i \theta O_1(x)}$ is a local unitary for some Hermitian operator $O_1(x)$ at site $x$ \footnote{As $\beta \to \infty$, our local quench protocol reduces to the ground-state local quench considered in Refs.~\cite{Nozaki2014quantum,nozaki2014notes,nozaki2013holographic,asplund2015holographic,caputa2014entanglement,bianchi2025entropy}. In this work, we are concerned only with $\beta \neq \infty$. Note also that our quench protocol, as well those in Refs.~\cite{Nozaki2014quantum,nozaki2014notes,nozaki2013holographic,asplund2015holographic,caputa2014entanglement,bianchi2025entropy}, is distinct from the `joining quench' considered in Ref.\cite{calabrese2007entanglement}}. At small $\theta$, the expectation value of any operator $O_2$ takes the form $\tr(\rho_{\beta,\theta}(t)O_2) = c + i \theta \tr\left( \rho_\beta [O_1(x,t),O_2]\right) + O(\theta^2)$, and analogously, at small $\beta$,  $\tr(\rho_{\beta,\theta}(t) O_2) = c' - \beta \tr\left(O'_1(x,t) O_2\right)/|\mathcal{H}| + O(\beta^2)$, where $c, c'$ are time-independent constants, $O'_1(x) = U^{\dagger}_\theta(x) H U_\theta(x) - H$ is a local operator, and $|\mathcal{H}|$ is Hilbert-space dimension.  Thus, the expectation values linear in $\theta$ or $\beta$ encode unequal-time correlation functions, as in the standard linear-response theory~\cite{kubo2012statistical}. It is convenient to express expectation values in terms of the canonical/standard purification~\cite{takahashi1975higher,schmutz1978real,terhal2002entanglement,dutta2021canonical}, $\tr(\rho_{\beta,\theta}(t) O) = \langle \sqrt{\rho_{\beta,\theta}(t)}|O|\sqrt{\rho_{\beta,\theta}(t)}\rangle$,
where  
\be 
|\sqrt{\rho_{\beta,\theta}(t)}\rangle = \left(\sqrt{\rho_{\beta,\theta}(t)} \otimes \mathds{1}_a\right) |\Phi\rangle \label{eq:canonicalpurification}
\ee 
with $|\Phi\rangle = \sum_i |i\rangle_s |i\rangle_a $ the unnormalized maximally entangled state between the system and an isomorphic ancilla. Note that for our problem, $\sqrt{\rho_{\beta,\theta}(t)} \propto \rho_{\beta/2,\theta}(t)$. This formulation allows a (not necessarily optimal) separation of classical and quantum correlations: if the canonical purification is short-range entangled (SRE), $\rho \otimes \mathds{1}_a$ can be written as a convex sum of SREs~\cite{chen2023symmetry}.
At least in $1+1$-D, the unperturbed state $|\sqrt{\rho_\beta}\rangle$
itself is expected to admit such an SRE decomposition, as suggested by area-law entanglement under the bipartition $A_sA_a|B_sB_a$ that groups each subsystem together with its ancilla~\cite{barthel2017one}, and by the
tensor-network approximability of the Gibbs state \cite{kuwahara2021improved}, \footnote{Viewed as a vector in the doubled Hilbert space, the entanglement of
$|\sqrt{\rho_\beta}\rangle$ across the cut $A_sA_a|B_sB_a$ is precisely the operator-space entanglement entropy (OSEE) of $\sqrt{\rho_\beta}$; the same identification applies to the time-evolved
state $|\sqrt{\rho_{\beta,\theta}(t)}\rangle$
\cite{zanardi2001entanglement,prosen2007operator}}. This raises the question: how entangled can $|\sqrt{\rho_{\beta,\theta}(t)}\rangle$ become after the quench---e.g., can Rényi entropies show volume-law scaling? Relatedly, are time-dependent observables captured by a few Schmidt modes? For non-integrable systems, one might expect even linear response to involve exponentially many Schmidt states.

As a concrete example, consider the Gibbs state of the mixed-field Ising model, $H = \sum_{i=1}^{L-1} Z_i Z_{i+1}+ \sum_{i=1}^{L} (g X_i + h Z_i)$, at $\beta \ll 1$, subjected to the above quench at $\theta = O(1)$. We compare the exact expectation value $\tr(\rho(t) O)$ with $\tr(\rho_{\textrm{trunc},A}(t) O)$ where $\rho_{\textrm{trunc},A}$ is obtained by performing a Schmidt decomposition
of $|\sqrt{\rho(t)}\rangle$ across a half-system bipartition $A_s A_a \mid B_s B_a$, and retaining only $k = O(1)$  largest Schmidt components. Fig.\ref{fig:z_string}(a) shows results for two operators: a single-site operator $O = Z_1$ and  a string operator spanning region A, $O = X_1Y_2Z_3X_4Y_5Z_6X_7$ for $L = 14$ and $k= 1$, i.e., we have kept  \textit{only} the leading Schmidt vector out of $2^{14}$. We find that the truncated expectation values track the exact ones very closely at all times, and correspondingly, the leading Schmidt state carries most of the weight as shown in Fig.\ref{fig:z_string}(b). Perhaps most interestingly, at small $\beta$ or small $\theta$, the truncation error after retaining $O(1)$ Schmidt states vanishes as $\theta^2$ or $\beta^2$ respectively (see the inset of Fig.\ref{fig:z_string}(b)). Therefore, the leading time-dependent signal, proportional to $\theta$ or $\beta$, is already encoded in an $O(1)$ number of top Schmidt states.

The rest of the paper explains and generalizes this observation. We first establish the mechanism in a minimal local-circuit setting which has a parameter $\epsilon$ analogous to $\beta, \theta$ above. As we will show, here the response linear in $\epsilon$ is encoded in a single Schmidt state. Relatedly, the Rényi entropies $S_{\alpha > 1}$ obey a constant law while $S_{\alpha \le 1}$ obey a volume law with the coefficient of $S_1$ parametrically small in $\epsilon$. Perhaps most interestingly, the Rényi entropies of the leading Schmidt state (defined via further bipartitioning) obey a constant law for $\alpha \ge \alpha_c$ and volume law for $\alpha <  \alpha_c$ where $\alpha_c < 1$. This implies~\cite{schuch2008entropy} that in $d=1$, the linear response for all times $t$ is encoded in a state that is approximable by a polynomial-bond-dimension matrix product state (MPS). We then return to local quenches in Gibbs states and show that this structure carries over closely.

\textbf{A circuit model.} The Hilbert space of our model consists of $N$ $q$-dimensional qudits on a $d$-dimensional lattice, with local basis $\{|0\rangle, |1\rangle, ..., |q-1\rangle\}$.  Consider a unitary circuit $U(t)$ consisting of geometrically local gates, and evolve a local, traceless, unitary operator $O$ as $O(t) = U(t) O U^{\dagger}(t)$. For now, we assume absence of any global symmetries (we will later briefly discuss a circuit with U(1) symmetry). The state of our interest is:

\begin{equation}
\ket{\psi(t)}
=
\frac{( \I + \eps O(t))\ket{0}}{\sqrt{\cN_t}}, \label{eq:randomcircuitstate}
\end{equation}
where $0 < \epsilon < 1$ is a real number, $\ket{0}$ is any product state, which will take to be the all-0 state ($ \ket{00...0}$), and $\cN_t = 1+\eps^2+2\eps \Re{a_t}$ is the normalization with
$a_t =\bra{0}O(t)\ket{0}$. Similar to the aforementioned discussion, for any operator $O'$, $\bra{\psi(t)}O' \ket{\psi(t)}$ encodes unequal-time connected correlator $\bra{0} \left(O^{\dagger}(t) O' + O' O(t)\right) \ket{0} - \bra{0}O^{\dagger}(t)\ket{0} \bra{0} O' \ket{0} - \bra{0} O' \ket{0} \bra{0}O(t)\ket{0}$ at linear order in $\epsilon$.

We now bipartition the system into two contiguous halves $A, B$ and study $\rho_A = \tr_B \ket{\psi(t)}\!\bra{\psi(t)}$. Decomposing
\(O(t)\ket{0}\) as
$O(t)\ket{0} = \ket{x}_A\otimes \ket{0_B}+
\ket{y}_{AB}$ where $\braket{0_B}{y}=0$, one finds the following convenient form for $\rho_A$:

\begin{equation}
\rho_A 
=
(1-\mu)\ket{v}\!\bra{v}+\mu\,\omega,
\label{eq:spike-cloud}
\end{equation}

where $\mu = \epsilon^2 \braket{y}{y}/\cN_t$, $\ket{v} = \frac{\ket{0_A}+\eps \ket{x}}{\sqrt{\cN_t(1-\mu)}}$ is a unit norm vector, $\omega = \tr_B \ket{y}\!\bra{y}/\braket{y}{y} $ is a normalized density matrix (note that since $O(t)\ket{0}$ is unit norm, $\braket{x}{x} + \braket{y}{y} = 1$). Eq.\ref{eq:spike-cloud} has several consequences. First, for correlation functions accurate only to $O(\epsilon)$ the $\mu \omega$ term in Eq.\ref{eq:spike-cloud} can be dropped, and one obtains a rank-1 approximant to $\rho_A$ \textit{for all time $t$}. Second, the decomposition in Eq.\ref{eq:spike-cloud} is ripe for obtaining sharp bounds on Rényi entropies $S_{\alpha}(\rho_A) =  \frac{1}{(1-\alpha)} \log(\tr \rho^\alpha_A)$. A short summary is as follows, see Appendix \ref{sec:renyi_inequalities} for details: 

(a) For $\alpha > 1$, $S_{\alpha}(\rho_A) \leq \frac{\alpha}{\alpha-1}\log(1/(1-\epsilon^2))$. This implies that Rényi entropies for $\alpha > 1$ do not scale with the system. This constant-law scaling has a simple origin: the decomposition in Eq.~\eqref{eq:spike-cloud} guarantees that the largest Schmidt eigenvalue satisfies \(\lambda_{\max}(\rho_A)\ge 1-\epsilon^2\) at all times.

(b) $\mu S_1(\omega) \le S_1(\rho_A) \le -\mu\log\mu -(1-\mu)\log(1-\mu)+ \mu S_1(\omega)$. This implies that the volume-law coefficient for $S_1$ is suppressed relative to that of $\omega$ by a factor of $\mu$ ($\approx \epsilon^2$ at small $\epsilon$). 

(c) For $\alpha < 1$, $S_\alpha(\omega)+\frac{\alpha}{1-\alpha}\log\mu
 \le
 S_{\alpha}(\rho_A)
 \le
 \frac{1}{1-\alpha}\log\Big[(1-\mu)^\alpha+\mu^\alpha e^{(1-\alpha)S_\alpha(\omega)}\Big]$, i.e., the volume-law coefficient of $S_{\alpha < 1}$ is identical to that for $\omega$. 

To make further progress, we make two assumptions. First, if $S_{\alpha}(\omega)$ is extensive for some $\alpha > 1$, then the state $\ket{v}$ in Eq.~\eqref{eq:spike-cloud}
is exponentially close to the leading Schmidt vector of $\rho_A$ and the corresponding leading Schmidt eigenvalue is separated from the rest of the spectrum by an $O(1)$ gap; see Appendix~\ref{sec:retained-vs-schmidt}
for a derivation.  An example where extensivity of $S_\alpha(\omega)$ can be established is circuits consisting of local Haar random gates where, at long times,  $\omega$ can effectively be replaced by the identity matrix within the intersection of the light cone with region $A$, and a product state outside it ~\cite{nahum2017quantum,nahum2018operator,keyserlingk2018operator,brandao2016local,harrow2023approximate}, \footnote{Note that if instead $S_\alpha(\omega)$ were area-law for some $\alpha<1$, then the aforementioned inequalities would imply an area-law bound for $S_\alpha(\rho_A)$ itself.}. By the above inequalities, the extensivity of $S_\alpha(\omega)$ also implies that $S_1(\rho_A)$ is extensive with a volume-law coefficient suppressed by $\epsilon^2$ at small $\epsilon$, while $S_{\alpha<1}(\rho_A)$ is extensive with an $O(1)$ volume-law coefficient inherited from $S_{\alpha < 1}(\omega)$. The physical origin of this R\'enyi-index-tuned transition is that the entanglement spectrum has a single $O(1)$ eigenvalue, approximately $1-\mu$, together with an exponentially large number of much smaller eigenvalues carrying total weight $\mu$ (Eq.\ref{eq:spike-cloud}). Further, assuming that at long times the state $O(t)\ket{0}$
has exponentially small overlap with any fixed product state, which is a reasonable assumption due to operator scrambling, one expects 
$|a_t|=|\bra{0}O(t)\ket{0}|=o(1)$ and $\langle x|x\rangle =o(1)$, or equivalently $\langle y|y\rangle \to 1$.
Then at long times $\mu\to \mu_\infty=\epsilon^2/(1+\epsilon^2)$.

\begin{figure}[t]
\centering
\includegraphics[width=\columnwidth]{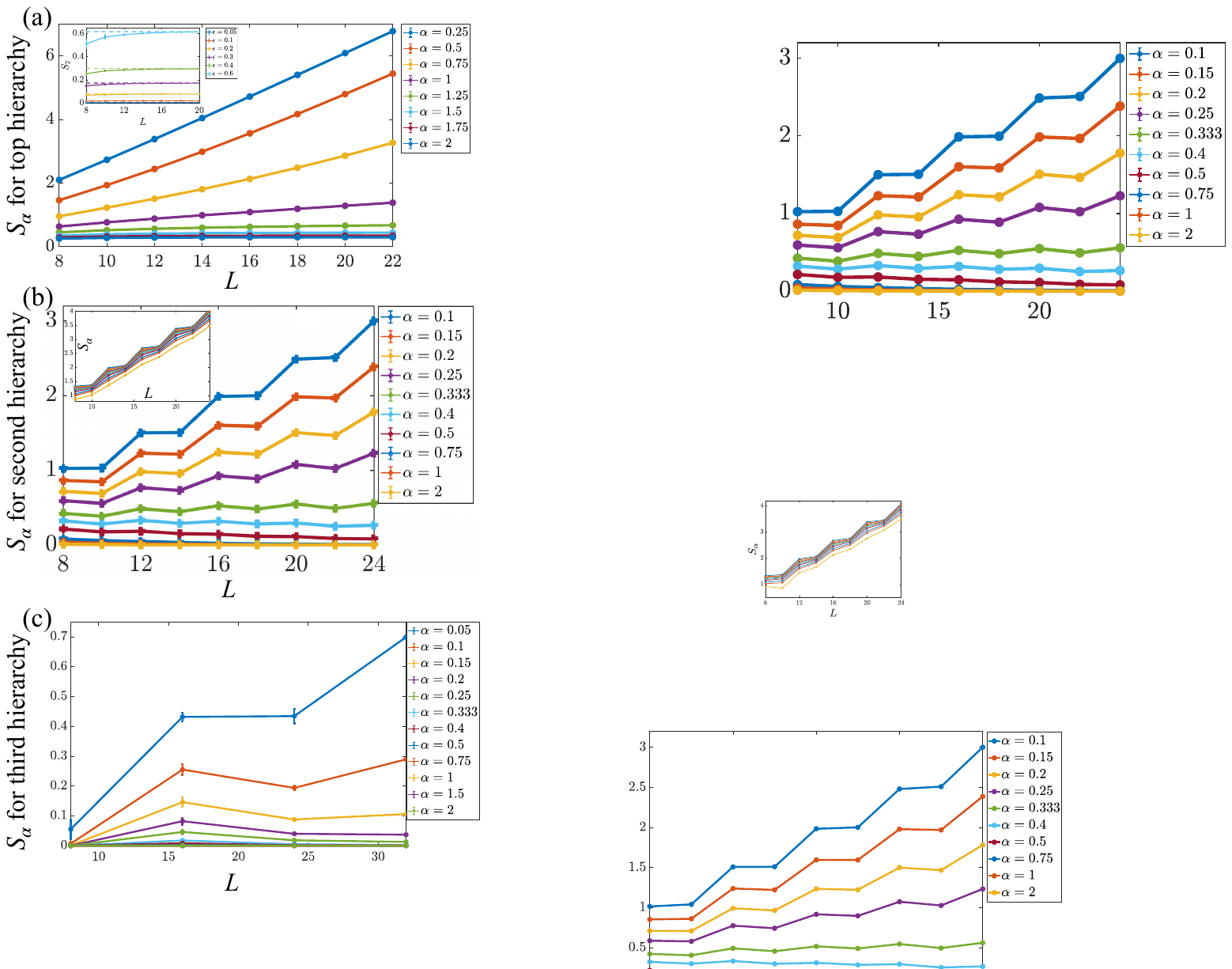}
\caption{Saturated Rényi entropies of the top Schmidt state corresponding to the first three hierarchies in the random circuit model (Eq.\ref{eq:randomcircuitstate}), with $L_A = L/2, L_{A_{1/4}} = \lfloor L/4 \rfloor, L_{A_{1/8}} = L/8$, at $\epsilon = 0.4$.  The plotted entropies are averaged over ten samples for (a) and four each for (b), (c). In (a), the error bars are smaller than the marker size. The inset of (a) shows $S_2(L)$ for several $\epsilon$, with dashed lines denoting the exact upper bound. The inset of (b) shows the Rényi entropies of the second highest Schmidt state in the second hierarchy. }
\label{fig:renyi_circuit_combo}
\end{figure}

The state $|v\rangle$ in Eq.\ref{eq:spike-cloud} is central to our discussion: it captures the $O(\epsilon)$ correlations and, under our assumptions, is the leading Schmidt state. Its entanglement structure therefore determines whether the linear response has efficient encoding. For example, in $d=1$, if $|v\rangle$ is approximable by a polynomial-bond-dimension MPS, then $|v\rangle \langle v|$ provides an area-law approximant for $\rho_A$. Following Ref.\cite{schuch2008entropy}, the key question is whether $S_\alpha(|v\rangle)$ is area-law for some $\alpha < 1$, uniformly over all cuts and times. To study this, we bipartition $A \equiv A_{1/2}$ into equal contiguous regions $A_{1/4}, B_{1/4}$. Notably,  $\ket{v} \propto \ket{0_A} + \epsilon \ket{x_{A_{1/2}}}$ has the
same structure as the state $\ket{\psi(t)}$, and its reduced density matrix also takes the analogous form,
\be 
\rho_{A_{1/4}}
=
(1-\mu_{1/4})\ket{v_{1/4}}\!\bra{v_{1/4}}
+
\mu_{1/4}\,\omega_{1/4}. \label{eq:rho_Aquarter}
\ee
The key difference between $\rho_{A_{1/4}}$ and $\rho_A \equiv \rho_{A_{1/2}}$ (Eq.\ref{eq:spike-cloud}) is that while $\mu$ remains an $O(1)$ constant for all times, $\mu_{1/4}$ is exponentially small in system size at long times. This is because when $t \sim L$, operator spreading makes the weight of $O(t)\ket{0}$ in any fixed
product-state slice of $B$, such as $\ket{0_B}$, exponentially small, which implies that $\mu_{1/4} \lesssim e^{-\gamma |A_{1/2}|}$ for some $\gamma > 0$, see Appendix~\ref{sec:entanglement_of_v}. Assuming that $\omega_{1/4}$ has extensive R\'enyi entropies,
$S_\alpha(\omega_{1/4}) \sim s_\alpha^{(1/4)} |A_{1/4}|$ for all $\alpha$, then $S_\alpha(\rho_{A_{1/4}})$ obeys a constant law for $\alpha \ge \alpha_c$, where
$\alpha_c = \frac{s_{\alpha_c}^{(1/4)}}{2\gamma + s_{\alpha_c}^{(1/4)}} < 1$. In fact, the constant value of $S_{\alpha > \alpha_c}$ approaches \textit{zero} in the thermodynamic limit. Thus, unlike the full state which undergoes a
constant-law to volume-law transition at $\alpha_c = 1$, the dominant Schmidt state $\ket{v}$ is expected to undergo an analogous transition at $\alpha_c < 1$. This conclusion holds for all extensive bipartitions, while cuts involving only $O(1)$ degrees of freedom are trivially
bounded.  See Appendix~\ref{sec:entanglement_of_v} for details. We thus conclude that \(\ket{v}\) is approximable by a polynomial-bond-dimension MPS in \(d=1\).

The above construction leads to an infinite nested hierarchy. Subdividing $A_{1/4}$ into $A_{1/8}, B_{1/8}$, and iterating, gives a critical index $\alpha_{c,j}$ at level $j$ of hierarchy: the density matrix $\rho_{A_{2^{-j}}} = \tr_{B_{2^{-j}}} |v_{2^{-j+1}}\rangle \langle v_{2^{-j+1}}|$, corresponding to the top Schmidt state $|v_{2^{-j+1}}\rangle$ of $\rho_{A_{2^{-j+1}}}$, has a constant Rényi entropy $S_\alpha$ for $\alpha > \alpha_{c,j}$. Repeating the aforementioned argument, one finds $\alpha_{c,j}$ decreases with $j$. In a solvable maximally scrambled limit, where  $\omega_{2^{-j}}$ is maximally mixed on $N/2^j$ qudits, one finds $\alpha_{c,j} = 1/(2^j - 1)$, see Appendix \ref{sec:scrambled-hierarchy}.

\begin{figure}[t]
\centering
\includegraphics[width=0.8\columnwidth]{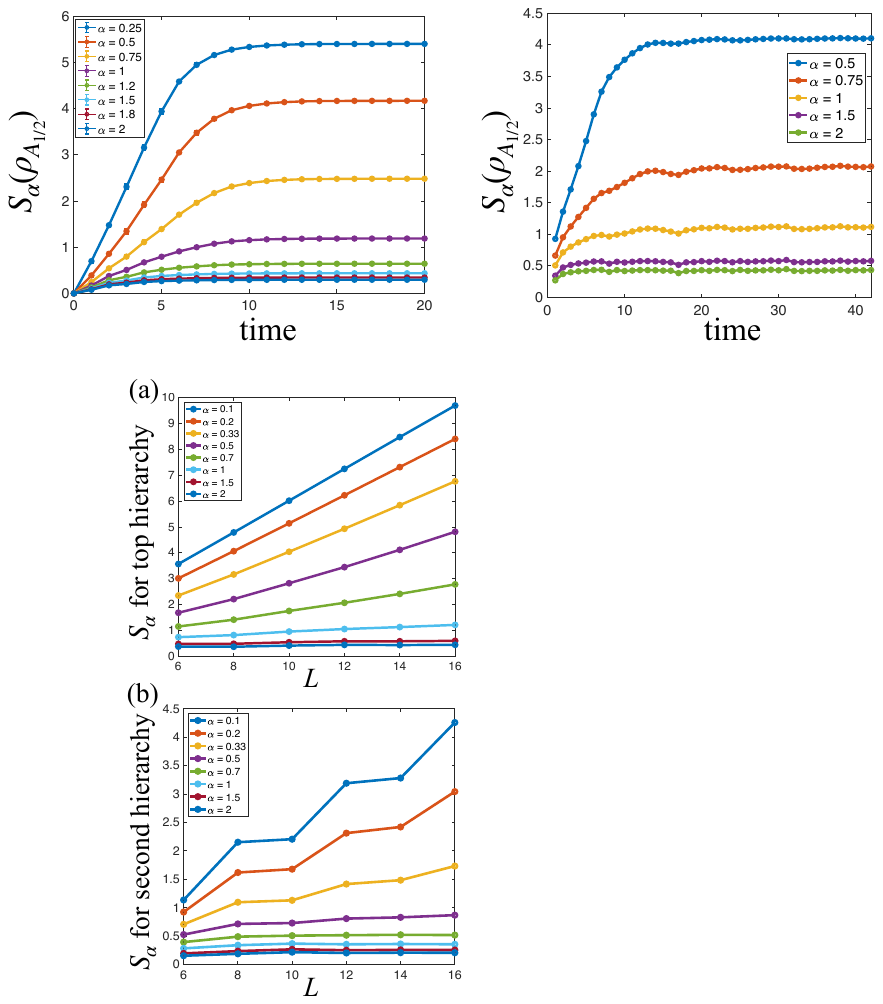}
\caption{(a) Saturated bipartite Rényi entropies of the state $|\sqrt{\rho_{\beta,\theta}(t)}\rangle$ for the mixed-field Ising model at $g = 1.1, h = 0.35, \beta = 1, \theta = 0.5, L_A = L/2$. (b) Saturated bipartite Rényi entropies for the leading Schmidt state of $|\sqrt{\rho_{\beta,\theta}(t)}\rangle$ with $L_{A_{1/4}} = \lfloor L_A/2 \rfloor$ defined via further bipartitioning region $A$ into two equal halves.}
\label{fig:combo_gibbs}
\end{figure}

\underline{Numerical study:} We test these predictions by ED for a one-dimensional chain of \(L\) qubits with open boundaries. The state is Eq.\ref{eq:randomcircuitstate} with $U(t)$ a depth-$t$ brickwork circuit of independent Haar-random nearest-neighbor two-qubit gates, $O = Z_{L/2}$, and $t \sim r L$ with $r = O(1)$. Figs.\ref{fig:renyi_circuit_combo}(a), (b), (c) show the saturated Rényi entropies of $\rho_{A_{1/2^{j}}}$ for the leading Schmidt eigenvector across three consecutive hierarchies $j = 1, 2, 3$ at $\epsilon = 0.4$. The entropies grow monotonically before saturation (see Appendix \ref{sec:timedepend}). We find clear evidence for constant-to-volume-law transition for all three hierarchies with critical indices close to the aforementioned solvable-limit values $\alpha_{c,j} = 1/(2^{j}-1) = 1, 1/3, 1/7$ for $j=1,2,3$ respectively \footnote{For \(j=2,3\), the dominant Schmidt vector(s) across the half-chain cut were obtained without forming or fully diagonalizing
\(\rho_{A_{1/2}}\). Instead, writing the reshaped wavefunction as \(\Psi\), we used standard power/subspace iteration with the matrix-free action
\[
x \mapsto \rho_{A_{1/2}}x = \Psi(\Psi^\dagger x),
\]
with convergence checked by the eigenpair residual; see, e.g., Saad~\cite{Saad2011}.}. The inset of Fig.\ref{fig:renyi_circuit_combo}(a) shows that saturated $S_2$ is very close to the aforementioned exact bound $S_2 \leq 2 \log(1/(1-\mu_\infty))$. The inset of   Fig.\ref{fig:renyi_circuit_combo}(b) shows $S_\alpha$ for the \textit{second}-highest Schmidt state of $\rho_{A_{1/2}}$, defined via the bipartition $A_{1/4}|B_{1/4}$. It is strikingly different from that for the leading Schmidt state (main panel of Fig.\ref{fig:renyi_circuit_combo}(b)), and exhibits a maximal-slope volume law for all $\alpha$, consistent with Eq.\ref{eq:spike-cloud}: the second Schmidt state belongs to the scrambled density matrix $\omega$, which is close to identity.

To confirm that our qualitative results are not tied to the Haar-random architecture, we also studied two other architectures: (i) a nearest-neighbor Clifford+\(T\) circuit drawn from a finite universal gate set; see Appendix~\ref{sec:universal-gateset-circuit}, and (ii) a U(1)-symmetric circuit~\cite{khemani2018operator,Rakovszky2018diffusive}; see Appendix~\ref{sec:U1circuit}. In both cases we find $\alpha_c = 1$ for $\rho_{A_{1/2}}$ and $\alpha_c < 1$ for $\rho_{A_{1/4}}$. We leave an analytical study of symmetric circuits for future work.

\textbf{Local quenches on Gibbs states.} We now return to local quenches on Gibbs states (recall Fig.~\ref{fig:z_string} and the associated discussion) and demonstrate the close parallel between the canonical purification $|\sqrt{\rho_{\beta,\theta}(t)}\rangle$ and the circuit-model state $|\psi(t)\rangle$ (Eq.~\eqref{eq:randomcircuitstate}).

The unperturbed purification $|\sqrt{\rho_\beta}\rangle$ is expected to obey an area law for bipartite Rényi entropy for all indices $\alpha$, in particular, $S_{1/2} \le c L_A^{d-1}$, where $c = O(1)$~\cite{barthel2017one}. One readily verifies that $|\sqrt{\rho_{\beta,\theta}(t)}\rangle$ has a non-zero $O(1)$ overlap with  $|\sqrt{\rho_\beta}\rangle$ for all time. Using techniques similar to Ref.\cite{muth2011dynamical}, this has two remarkable consequences, both paralleling the circuit model. First, $S_{\alpha > 1}$ obeys an area law at all times: \be S_{\alpha > 1} \le 
\left(c L^{d-1}_A + \frac{\alpha}{\alpha-1} \beta \normop{\Delta H}
\right),\ee where $\Delta H = U_\theta^{\dagger}(x) H U_\theta(x) - H$ is local and hence has a bounded norm in a spin system. Second, the largest Schmidt eigenvalue of $\rho_A$ ($A = A_s A_a$) is $O(1)$ for all times; see Appendix \ref{sec:renyisGibbs}.
The area law for $S_{\alpha > 1}$ is reminiscent of numerical results in Ref.\cite{alba2025more}, where  the Rényi OSEE of operators with nonzero trace in certain non-integrable spin-chains was found to  grow only logarithmically with time at indices $\alpha > 1$. In our case, $\sqrt{\rho_{\beta,\theta}(t)}$ also has nonzero trace, but its persistent overlap with $\sqrt{\rho_\beta}$ leads to an even stronger suppression: an area law for $\alpha > 1$.

We numerically study the entanglement structure of $|\sqrt{\rho_{\beta,\theta}(t)}\rangle$ for the mixed-field Ising model, $H = \sum_{i=1}^{L-1} Z_i Z_{i+1}+ \sum_{i=1}^{L} (g X_i + h Z_i) + Z_1/4 - Z_L/4$ with the local quench unitary $U_\theta = e^{-i \theta Z_{L/2}}$. The Hamiltonian parameters are taken from Ref.\cite{RodriguezNieva2024quantifying} to maximize certain chaos diagnostics \footnote{The boundary terms ensure that the system does not have any spatial symmetries.}. As in the circuit model, $S_\alpha$ shows clear numerical evidence of an area-law to volume-law transition at $\alpha = 1$. The mechanism is again the entanglement spectrum of $\rho_A$, which contains an $O(1)$ number of large eigenvalues, as mentioned above on general grounds and also visible in Fig.\ref{fig:z_string}(b). The entanglement structure of these dominant Schmidt states also parallels that in the circuit model: their Rényi entropies, defined via bipartitioning $A$, undergo an area-law to volume-law transition at $\alpha_c < 1$, see Fig.~\ref{fig:combo_gibbs}(b). Therefore, we expect that the dominant Schmidt states are again approximable by polynomial-bond-dimension MPS. All entropies essentially grow monotonically with time before saturating; see Appendix \ref{sec:timedepend}.

The parallel between the circuit model and the Gibbs state becomes sharper at small $\beta$ or $\theta$. For $\beta \ll 1$ at fixed $\theta = O(1)$,
one finds $\rho_A(t)
= |v(t)\rangle\langle v(t)|+ O(\beta^2)$ where $|v(t)\rangle = |\phi_A(t)\rangle -
\frac{\beta}{2}|x(t)\rangle_A
+
O(\beta^2)$ and $|\phi_A(t)\rangle, |x(t)\rangle$ are some vectors defined analogously to the circuit model. Thus, to linear order in $\beta$, the dynamics is encoded in a
single state on $A$.
Similarly, for $\theta \ll 1$ at fixed $\beta = O(1)$,
$\rho_A(t)
=
\sum_{i=1}^{D}
\Bigl(
\sqrt{p_i}\,|\Phi_A^i\rangle
+
i\theta\,|x_A^i(t)\rangle
\Bigr)
\Bigl(
\sqrt{p_i}\,\langle \Phi_A^i|
-
i\theta\,\langle x_A^i(t)|
\Bigr)
+
O(\theta^2)$,
where $D$ is the number of Schmidt states retained in a
finite-dimensional approximation to the thermal purification
$|\sqrt{\rho_\beta}\rangle$, and the vectors $|\Phi_A^i\rangle, |x_A^i(t)\rangle$ are defined analogously. See Appendices \ref{sec:smallbeta},\ref{sec:smalltheta}. In both limits, the leading time-dependent signal is
therefore captured by \(O(1)\) Schmidt states, with the corresponding truncation error vanishing as \(\beta^2\) (\(\theta^2\)) at small \(\beta\) (\(\theta\)), as also shown in Fig.~\ref{fig:z_string}(b). To be explicit, this implies that for all times \(t\), there exists a collection of \(D=O(1)\) (unnormalized) MPS-approximable states \(\{|v_i(t)\rangle\}_{i=1}^D\subset \mathcal H_A\) such that, for any \(O_2\) supported in \(A\) with \(\|O_2\|=O(1)\),
\begin{eqnarray}
& & \hspace{-1cm}\sum_{i=1}^{D}\langle v_i(t)|O_2|v_i(t)\rangle =  \nonumber \\
& & \hspace{-1cm}\tr(\rho_{\beta}O_2) + 
i\theta\,\tr\!\left(\rho_\beta[O_1(x,t),O_2]\right)
+ O(\theta^2), \nonumber
\end{eqnarray}
with the \(O(\theta^2)\) term uniformly bounded in time. An analogous statement holds in the small-\(\beta\) expansion. The above results when combined with Fannes--Audenaert inequality~\cite{fannes1973,audenaert2007sharp} imply that the volume-law coefficient of the von Neumann entropy, $\lim_{L\to\infty} \frac{S_1(\rho_A)}{L}$,
is at most of order $\beta^2$ ($\theta^2$) at small $\beta$ ($\theta$). This is consistent with the numerical results in Appendix~\ref{sec:volume-law-coeff}, which show that the volume-law coefficient of $S_1$ vanishes approximately as $\beta^2$ ($\theta^{1.8}$) at small $\beta$ ($\theta$). The small deviation from quadratic scaling in $\theta$ is likely a finite-size effect. We also studied the regime $\beta, \theta \ll 1$ and found that the truncation error after keeping $O(1)$ Schmidt states vanishes as $(\beta\theta)^2$, as expected; see Appendix~\ref{sec:beta_theta_small}.

\textbf{Discussion.} We expect that the phenomena described here should extend to Gibbs states of generic interacting  quantum field theories, including holographic ones.  Although related local-quench and purification setups have been studied ~\cite{Nozaki2014quantum,nozaki2014notes,nozaki2013holographic,asplund2015holographic,caputa2014entanglement,dutta2021canonical,kusuki2021dynamics,kudler2021quasi,bianchi2025entropy}, we are not aware of a treatment of the locally quenched canonical purification in Eq.~\eqref{eq:canonicalpurification} for a strongly interacting field theory. Such a calculation may be subtle. Even for local quenches on ground states, the analytic continuation from R\'enyi entropies to obtain $S_1$ can involve order-of-limits issues ~\cite{caputa2014entanglement}. In the present setting, this issue is especially sharp: although $S_\alpha$ is smooth as $\alpha\to1$ at any finite system size, our results imply that the limits $\alpha\to1$ and $L\to\infty$ do not commute, leading to an area law for $S_{\alpha>1}(\rho_{A_{1/2}})$ but a volume-law component in $S_{\alpha \le 1}(\rho_{A_{1/2}})$.

We emphasize that our results on area-law approximants are statements about the entanglement structure and representability---we showed the existence of such a representation by truncating the \textit{exact state} obtained using ED. This does not imply the existence of a general-purpose classical algorithm for finding the approximant with polynomial resources! This distinction is important: low bond-dimension by itself does not guarantee that one can \textit{find} such a representation efficiently. Indeed, there are one-dimensional Hamiltonians whose ground states are MPS, yet finding the relevant MPS can be computationally hard \cite{Schuch2008computational,Jiang2025local}. On that note, it is useful to contrast the top level of the hierarchy with the lower levels. If the full time-evolved state had an area law for \(S_\alpha\) at some \(\alpha<1\), uniformly over cuts and over poly($L$) times, then in one dimension it would admit a polynomial-bond-dimension MPS simulation~\cite{Vidal04,schuch2008entropy}. Such a conclusion would be highly unexpected for generic chaotic systems such as the ones considered here. Therefore, the top-level value \(\alpha_c=1\) is consistent with complexity-theoretic expectations. Similarly, with regard to our circuit model, BQP/PromiseBQP computations can be formulated as constant-gap estimates of local observables such as $\langle 0|O(t)|0\rangle$~\cite{Watrous2008quantum,Janzing2007simple,Shor1997polynomial,Aharonov2009polynomial}. Such observables can be embedded into the linear-response form of Eq.~\eqref{eq:randomcircuitstate} \footnote{For example, add a flag/ancilla qubit $a$ to the total system, take \(O=Z_x X_a\), and evolve everything except the flag qubit with the circuit $U$, i.e., $O(t) = U Z_x U^{\dagger} X_a$. Then, for the state $|\psi(t)\rangle$ in Eq.~\eqref{eq:randomcircuitstate}, the term proportional to $\epsilon$ in $\langle \psi(t)| X_a|\psi(t)\rangle$ is $2\langle0|U Z_x U^{\dagger}|0\rangle$.}. Thus, if for BQP-complete circuit families the corresponding state \(|v\rangle\) remains area law in the sense discussed above, and if one can also efficiently track it for poly($L$) times with sufficient accuracy to resolve the \(O(\epsilon)\) signal, then one would obtain \(\mathrm{BQP}\subseteq\mathrm{BPP}\). This is not expected, which again underlines the difficulty of finding such a general-purpose classical algorithm. Nonetheless, there is a growing and substantial literature on classical simulation of quantum dynamics in one dimension, where the emphasis is typically on approximate, or in some regimes controlled, tensor-network and operator-space methods for selected observables or dynamical regimes~\cite{leviatan2017quantum,Houschild2018finding,White2018quantumdynamics,Rakovszky2022dissipation,Keyserlingk2022Operator,YiThomas2024Comparing,Artiaco2024Efficient,Angrisani2025classically,rudolph2025pauli,cruz2025quantum,anand2026kpz}. It would therefore be interesting to explore new algorithms that exploit the structure discussed in our work.

\textit{\underline{Acknowledgments:}} I  thank John McGreevy, Tadashi Takayanagi and Yu-Hsueh Chen for helpful comments on the manuscript. I  acknowledge use of ChatGPT (OpenAI) for help with writing exact-diagonalization codes. This work is supported by
the National Science Foundation under Grant No. DMR-2521369.

%

\appendix
\onecolumngrid

\newpage

\section{Details of inequalities for Rényi entropies} \label{sec:renyi_inequalities}

Our starting point is Eq.\ref{eq:spike-cloud} in the main text:

\begin{equation}
\rho_A 
=
(1-\mu)\ket{v}\!\bra{v}+\mu\,\omega,
\label{eq:spike-cloud-app}
\end{equation}

where $\mu = \epsilon^2 \braket{y}{y}/\cN_t$, $\ket{v} = \frac{\ket{0_A}+\eps \ket{x}}{\sqrt{\cN_t(1-\mu)}}$ is a unit norm vector, $\omega = \tr_B \ket{y}\!\bra{y}/\braket{y}{y} $ is a normalized density matrix (with the constraint $\braket{x}{x} + \braket{y}{y} = 1$). We are interested in point-wise (i.e. for a single realization of the circuit $U(t)$), as well as ensemble-averaged bounds on the Rényi entropies $S_\alpha = \frac{1}{1-\alpha} \log(\tr(\rho^\alpha_A))$. Let's write the eigendecomposition of $\rho_A$ as $\rho_A = \sum_i \lambda_i \ket{\lambda_i}\!\bra{\lambda_i}$ with $\lambda_1 \ge \lambda_2 \ge \lambda_3 \dots$. We consider the three cases of interest in turn: (a) $\alpha > 1$, (b) $\alpha = 1$, and (c) $\alpha < 1$. 

\begin{enumerate}[(a)]
\item $S_{\alpha > 1}$: First, we notice that $\lambda_1 \ge \bra{v} \rho_A \ket{v} \ge (1-\mu)$. Therefore, $\tr(\rho^\alpha_A) = \sum_i \lambda^\alpha_i \ge \lambda^\alpha_1 \ge (1-\mu)^\alpha$, and therefore, one obtains $S_{\alpha > 1} \le \frac{\alpha}{\alpha-1}\log(1/(1-\mu))$. Further, one can show that $\mu \le \epsilon^2$ (see next paragraph), which implies that 
\begin{equation}
S_{\alpha > 1} \le \frac{\alpha}{\alpha-1}\log(1/(1-\epsilon^2)).
\end{equation}
Note that this bound holds point-wise and therefore is also true for $\mathbb{E} \,S_{\alpha > 1}$, the ensemble averaged Rényi entropies.

It remains to prove \(\mu\le \eps^2\). Since \(\|x\|^2 := \braket{x}{x} =1-\|y\|^2\),
\begin{align}
\cN_t
&=
\|\ket{0_A}+\eps \ket{x}\|^2+\eps^2 \|y\|^2 \nonumber \\
&\ge
(1-\eps\|x\|)^2+\eps^2 \|y\|^2 \nonumber \\
&=
(1-\eps\sqrt{1-\|y\|^2})^2+\eps^2 \|y\|^2 \nonumber \\
&=
\|y\|^2+(\sqrt{1-\|y\|^2}-\eps)^2 \nonumber \\
&\ge \|y\|^2. \nonumber
\end{align}
Hence $\mu = \frac{\eps^2 \|y\|^2}{\cN_t}\le \eps^2$.

\item $S_1$: We apply the standard Holevo-type inequality $\sum_i p_i \log(1/p_i) + \sum_i p_i S(\rho_i) \ge S(\sum_i p_i \rho_i) \ge \sum_i p_i S(\rho_i)$ to Eq.\ref{eq:spike-cloud-app}:

\begin{equation}
\mu S(\omega)
\le
S(\rho_A)
\le
h_2(\mu)+\mu S(\omega),
\label{eq:vN-pointwise}
\end{equation}
where \(h_2(x):=-x\log x-(1-x)\log(1-x)\).
Taking the ensemble average of these pointwise inequalities, one obtains, $\mathbb{E}\, [\mu S(\omega)]
\le
\mathbb{E}\, [S(\rho_A)]
\le
\mathbb{E}\, [h_2(\mu)]+\mathbb{E}\, [\mu S(\omega)]$.

These inequalities imply that the volume law coefficient of $S(\rho_A)$ is given by $\lim_{N \to \infty} \mu S(\omega)/N$, and is therefore upper bounded by $\epsilon^2 \log(q)$.

\item $S_{\alpha < 1}$:  We now show that when $\alpha < 1$,

\begin{equation}
\mu^\alpha\tr\omega^\alpha
\le
\tr \rho^\alpha_A
\le
(1-\mu)^\alpha+\mu^\alpha\tr\omega^\alpha.
\end{equation}
Equivalently,
\begin{equation}
S_\alpha(\omega)+\frac{\alpha}{1-\alpha}\log\mu
\le
S_\alpha(\rho_A(t))
\le
\frac{1}{1-\alpha}\log\Big[(1-\mu)^\alpha+\mu^\alpha e^{(1-\alpha)S_\alpha(\omega)}\Big].
\label{eq:n>1pointwise}
\end{equation}

\begin{proof}
For \(0<\alpha<1\), the map \(x\mapsto x^\alpha\) is operator monotone increasing. Since \(\rho_A\ge \mu \omega\), monotonicity gives
\begin{equation}
\rho_A^\alpha\ge (\mu\omega)^\alpha=\mu^\alpha\omega^\alpha.
\end{equation}
Taking the trace gives the lower bound in Eq.\ref{eq:n>1pointwise}. The upper bound follows from the Rotfel'd inequality~\cite{bhatia2013matrix}:
\begin{equation}
\tr(A+B)^\alpha\le \tr A^\alpha+\tr B^\alpha
\qquad (0<\alpha<1),
\end{equation}
applied to \(A=(1-\mu)\ket{v}\!\bra{v} \) and \(B=\mu\omega\).
\end{proof}

\end{enumerate}

\section{Bound on the overlap between $|v\rangle$ and the top Schmidt state of $\rho$}\label{sec:retained-vs-schmidt}

We again start with Eq.\ref{eq:spike-cloud-app}.
Again writing $\rho_A = \sum_i \lambda_i |\lambda_i \rangle \langle \lambda_i|$ with $\lambda_1 \geq \lambda_2 \geq ...$, we wish to prove that $\ket{\lambda_1}$ has a large overlap with $\ket{v}$. The basic idea of the proof is to relate the overlap $\langle \lambda_1|v\rangle$ to the norm of the state $Q \rho_A |v\rangle$, where $Q = \mathds{1} - \ket{v}\!\bra{v}$ is the projection orthogonal to $|v\rangle$. It will be useful to define the following scalars, $a := \bra{v}\omega\ket{v}, \beta := \|\omega\|_\infty, m := \langle v |\rho_A|v\rangle = 1-\mu + \mu a$, and $\Delta := 1-\mu+\mu a-\mu \beta$. While $\Delta$ is unconditionally positive for $\mu < 1/2$, in physical applications, we expect both $a$ and $\beta$ to be exponentially small in the system size due to operator scrambling, and therefore, $\Delta$ is expected to be positive and an $O(1)$ number independent of system size for all $0 < \mu < 1$. We now state the main result.

\begin{proposition}
\label{prop:retained_vs_top}
If $\Delta > 0$, the top eigenvector $\ket{\lambda_1}$ of $\rho_A$  obeys the bound:
\begin{equation}
1-|\braket{\lambda_1}{v}|^2 \le \frac{\mu^2\bigl(\bra{v}\omega^2\ket{v}-a^2\bigr)}{\Delta^2} \le \frac{\mu^2 a}{\Delta^2}.
\end{equation}
\end{proposition}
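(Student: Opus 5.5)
The plan is a Davis--Kahan/residual argument: bound $\rho_A$'s action on $\ket{v}$ away from the direction $\ket{v}$ itself, and combine it with a spectral gap below $m=\bra{v}\rho_A\ket{v}$.

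\emph{Step 1 (the residual).} Define the residual $\ket{r}:=(\rho_A-m)\ket{v}$. Since $\bra{v}\rho_A\ket{v}=m$, we have $\ket{r}=Q\rho_A\ket{v}$. Using Eq.~\eqref{eq:spike-cloud-app}, the rank-one part is annihilated by $Q$, so $\ket{r}=\mu\,Q\omega\ket{v}$. Therefore
\[
\|r\|^2=\mu^2\bigl(\bra{v}\omega^2\ket{v}-|\bra{v}\omega\ket{v}|^2\bigr)=\mu^2\bigl(\bra{v}\omega^2\ket{v}-a^2\bigr).
\]
This is exactly the numerator in the proposition.

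\emph{Step 2 (the gap).} Next I would show that $\lambda_i\le m-\Delta$ for all $i\ge 2$. By Weyl's inequality for the sum of the rank-one operator $(1-\mu)\ket{v}\!\bra{v}$ and $\mu\omega$,
\[
\lambda_2(\rho_A)\le \lambda_2\bigl((1-\mu)\ket{v}\!\bra{v}\bigr)+\lambda_1(\mu\omega)=0+\mu\beta.
\]
Hence $m-\lambda_i\ge 1-\mu+\mu a-\mu\beta=\Delta>0$ for all $i\ge2$. Since also $\lambda_1\ge m$, the top eigenvalue is nondegenerate, so $\ket{\lambda_1}$ is well defined up to phase.

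\emph{Step 3 (combining).} Expand $\ket{v}=\sum_i c_i\ket{\lambda_i}$. Then
\[
\|r\|^2=\sum_i |c_i|^2(\lambda_i-m)^2\ge \Delta^2\sum_{i\ge2}|c_i|^2=\Delta^2\bigl(1-|\braket{\lambda_1}{v}|^2\bigr).
\]
This gives the first inequality.

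\emph{Step 4 (the second inequality).} Use $\omega^2\le\|\omega\|_\infty\,\omega=\beta\,\omega$ and $\beta\le 1$, since $\omega$ is a density matrix. This gives $\bra{v}\omega^2\ket{v}\le \beta a\le a$, and dropping the nonnegative $a^2$ term yields $\mu^2 a/\Delta^2$.

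The only step needing care is Step 2, the gap. The key point is that Weyl's inequality applies to the perturbation $\mu\omega$ itself rather than to a perturbation of $m$, which is what produces precisely the quantity $\Delta$ appearing in the statement. The rest is routine bookkeeping.
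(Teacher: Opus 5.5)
Your proof is correct and follows the paper's argument essentially step for step: the residual $\|Q\rho_A\ket{v}\|^2=\mu^2(\bra{v}\omega^2\ket{v}-a^2)$, the gap $m-\lambda_2\ge\Delta$, and the expansion of $\ket{v}$ in the eigenbasis of $\rho_A$ are exactly the paper's ingredients. The only differences are cosmetic: the paper gets $\lambda_2\le\mu\beta$ from Courant--Fischer with trial vector $\ket{v}$, which is precisely how the Weyl inequality you cite is proved, and it uses $\omega^2\le\omega$ directly rather than your slightly sharper $\omega^2\le\beta\,\omega$.
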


\begin{proof}
    We first show that $m - \lambda_2 \geq \Delta$. The basic idea is to use Courant-Fischer min-max theorem. This theorem states that 

    \begin{equation}
\lambda_2
=
\min_{\substack{\textrm{any}\,\, \ket{\phi} \\ \|\phi\|=1}}
\ \max_{\substack{\ket{\psi} \perp \ket{\phi}\\ \|\psi\|=1}}
\bra{\psi}\rho_A\ket{\psi}.
\label{eq:CF}
\end{equation}

The optimal $\ket{\phi}$ in the whole Hilbert space is given by $\ket{\lambda_1}$ in which case this theorem is the standard variational principle for finding the first excited state. However, for us, it is expedient to instead use $\ket{\phi} = \ket{v}$. Using our decomposition in Eq.\ref{eq:spike-cloud-app} for $\rho$ then implies 
\be 
\lambda_2 \leq \max_{\substack{\ket{v} \perp \ket{\psi}\\ \|\psi\|=1}}
\mu \bra{\psi}\omega\ket{\psi} \leq \mu \|\omega\|_\infty = \mu \beta 
\ee

Therefore, 

\begin{equation}
m - \lambda_2 \ge m - \mu\beta = 1-\mu+\mu a-\mu\beta = \Delta > 0. \label{eq:lambda2bound}
\end{equation}

As hinted above, it will be useful to calculate the norm of the state $Q\rho_A\ket{v}$. A simple calculation gives

\begin{equation}
\|Q\rho_A\ket{v}\|^2 = \mu^2 \bra{v}\omega Q \omega \ket{v} = \mu^2\bigl(\bra{v}\omega^2\ket{v}-a^2\bigr).
\end{equation}
Because $\omega$ is a density matrix with eigenvalues $\le 1$, we have the operator inequality $\omega^2 \le \omega$. This implies $\bra{v}\omega^2\ket{v} \le \bra{v}\omega\ket{v} = a$, which provides an upper bound $\|Q\rho_A\ket{v}\|^2 \le \mu^2 a$.

Notice that using the definition $\bra{v} \rho_A \ket{v} = m$, $\|Q\rho_A\ket{v}\|^2 = \|(\rho_A-m)\ket{v}\|^2$. Now we expand $\ket{v} = \sum c_k \ket{\lambda_k}$ in the eigenbasis of $\rho_A$ which yields:
\bea
\|Q\rho_A\ket{v}\|^2 & = & \sum_{k \geq 1} |c_k|^2(\lambda_k-m)^2 \nonumber \\ 
& \geq & \sum_{k \geq 2} |c_k|^2 (\lambda_k-m)^2 \nonumber \\  
& \geq & \Delta^2 \sum_{k \geq 2} |c_k|^2 \nonumber \\
& = & \Delta^2 (1-|c_1|^2), \label{eq:boundQrhov}
\eea
where the second inequality in Eq.\ref{eq:boundQrhov} follows from Eq.\ref{eq:lambda2bound}.

Finally using $\|Q\rho_A\ket{v}\|^2 \le \mu^2 a$ and noticing that $c_1 = \langle \lambda_1|v\rangle$, the inequality in Eq.\ref{eq:boundQrhov} precisely yields our target bound:
\begin{equation}
1-|\braket{\lambda_1}{v}|^2 \le \frac{\|Q\rho_A\ket{v}\|^2}{\Delta^2} \le \frac{\mu^2 a}{\Delta^2}.
\end{equation}

\end{proof}

Proposition \ref{prop:retained_vs_top} is most useful when $a$ is small and $\Delta = \mathcal{O}(1)$. In the physical applications below, \(a\) and \(\beta=\|\omega\|_\infty\) are exponentially small, so for fixed \(\epsilon<1\), \(\Delta\) is \(O(1)\). The smallness of $a$ is captured by the following lemma.

\begin{lemma}
\label{lem:entropy_implies_backflow}
If the Rényi entropy $S_\alpha(\omega) \ge s_\alpha |A|$ for some index $\alpha>1$, then
\begin{equation}
a = \bra{v}\omega\ket{v} \le e^{-\frac{\alpha-1}{\alpha}s_\alpha |A|}.
\end{equation}
\end{lemma}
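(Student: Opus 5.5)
Since $\ket{v}$ is a unit vector, $a=\bra{v}\omega\ket{v}\le \|\omega\|_\infty$, the largest eigenvalue of $\omega$. So the task reduces to bounding the top eigenvalue of $\omega$ in terms of a R\'enyi entropy with index $\alpha>1$. This is the standard fact that, for $\alpha>1$, $S_\alpha$ controls the min-entropy $S_\infty=-\log\|\omega\|_\infty$ up to the factor $\frac{\alpha-1}{\alpha}$.

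The key step is as follows. Let $p_1\ge p_2\ge\dots$ be the eigenvalues of $\omega$, so $\|\omega\|_\infty=p_1$. Since $\alpha>1$ and all $p_i\ge 0$,
\[
\tr\omega^\alpha=\sum_i p_i^\alpha\ge p_1^\alpha .
\]
Taking logarithms and dividing by $1-\alpha<0$ (which flips the inequality) gives
\[
S_\alpha(\omega)=\frac{1}{1-\alpha}\log\tr\omega^\alpha\le \frac{\alpha}{1-\alpha}\log p_1=\frac{\alpha}{\alpha-1}\log(1/p_1),
\]
that is, $p_1\le e^{-\frac{\alpha-1}{\alpha}S_\alpha(\omega)}$. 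This is the same manipulation already used in part (a) of Appendix~\ref{sec:renyi_inequalities}, with the roles of $\rho_A$ and $\omega$ swapped.

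Combining with the hypothesis $S_\alpha(\omega)\ge s_\alpha|A|$ and the monotonicity of $x\mapsto e^{-cx}$ for $c=\frac{\alpha-1}{\alpha}>0$, I get
\[
a\le p_1\le e^{-\frac{\alpha-1}{\alpha}s_\alpha|A|}.
\]

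There is no real obstacle here. The only point needing care is the sign flip from dividing by $1-\alpha$. It is also worth noting that the proof bounds $\beta=\|\omega\|_\infty$ itself, not just $a$. That simultaneously controls $\Delta=1-\mu+\mu a-\mu\beta\ge 1-\mu-\mu\beta$, which is why Proposition~\ref{prop:retained_vs_top} then yields an exponentially small overlap defect with an $O(1)$ gap.
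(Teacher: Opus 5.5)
Your proof is correct. It reaches the same endpoint as the paper by a slightly different and more elementary route.

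The paper applies H\"older's inequality for Schatten norms directly to $a=\tr(\omega P)$ with $P=\ket{v}\!\bra{v}$, obtaining $a\le\|\omega\|_\alpha\|P\|_{\alpha/(\alpha-1)}=\|\omega\|_\alpha$. You instead pass through the operator norm, $a\le\|\omega\|_\infty$, and then use $\|\omega\|_\infty^\alpha\le\tr\omega^\alpha$, i.e.\ $\|\omega\|_\infty\le\|\omega\|_\alpha$. Both chains end at $\|\omega\|_\alpha=e^{-\frac{\alpha-1}{\alpha}S_\alpha(\omega)}$, so the final bound is identical.

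Your version gives the stronger intermediate statement that $\beta=\|\omega\|_\infty$ itself is exponentially small. That is exactly what is needed to make $\Delta=O(1)$ in Proposition~\ref{prop:retained_vs_top}; the paper only asserts this, and your remark on $\Delta\ge 1-\mu-\mu\beta$ is correct.

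The H\"older form extends more directly to a rank-$k$ projector $P$, giving $\tr(\omega P)\le k^{1-1/\alpha}\|\omega\|_\alpha$. The operator-norm route gives only $k\|\omega\|_\infty$ unless you add a further H\"older step on the top $k$ eigenvalues.
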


\begin{proof}
The intuition is that extensive entropy implies that the eigenvectors cannot be too concentrated along any single direction, including $\ket{v}$. Therefore, $a = \bra{v}\omega\ket{v}$ must be small. Defining $P = \ket{v}\!\bra{v}$, we use the Hölder inequality for Schatten norms:
\begin{equation}
a = \tr(\omega P) \le \|\omega\|_\alpha  \|P\|_{\alpha/(\alpha-1)} = \|\omega\|_\alpha,
\end{equation}
where we have used the fact that $P$ is a rank-one projector (and thus its Schatten norm is $1$ for any exponent). 

Using the definition of the Schatten norm $\|\omega\|_{\alpha} = [\tr(\omega^{\alpha})]^{1/\alpha}$, and recalling that $\tr(\omega^\alpha) = e^{(1-\alpha)S_\alpha(\omega)}$, this implies:
\begin{equation}
a \le e^{-\frac{\alpha-1}{\alpha}s_\alpha |A|}.
\end{equation}
\end{proof}

In a generic circuit, $S_\alpha(\omega)$ is expected to be extensive for all $\alpha$ due to operator scrambling.

\section{Entanglement structure of $\ket{v}$}\label{sec:entanglement_of_v}

To study the entanglement structure of the state $\ket{v}$, we focus on
the long-time regime $t \sim L$ and divide
$A \equiv A_{1/2}$ into two contiguous regions
$A_{1/4}, B_{1/4}$, with $|A_{1/4}| = |B_{1/4}| = N/4$ (the arguments below only rely on $|A_{1/4}|, |B_{1/4}|$ being extensive).
Since
\[
\ket{v} \propto \ket{0_A} + \epsilon \ket{x_{A_{1/2}}},
\]
it has the same structure as the state $\ket{\psi(t)}$ defined on the
full Hilbert space. This suggests the decomposition
\begin{equation}
\ket{x_{A_{1/2}}}
=
\ket{x_{A_{1/4}}}\otimes \ket{0_{B_{1/4}}}
+
\ket{y_{A_{1/2}}},
\qquad
\braket{0_{B_{1/4}}}{y_{A_{1/2}}} = 0.
\end{equation}

Tracing out $B_{1/4}$ yields
\begin{equation}
\rho_{A_{1/4}}
=
\tr_{B_{1/4}} \ket{v}\bra{v}
=
(1-\mu_{1/4})\ket{v_{1/4}}\!\bra{v_{1/4}}
+
\mu_{1/4}\,\omega_{1/4},
\end{equation}
where
\[
\ket{v_{1/4}} \propto \ket{0_{A_{1/4}}} + \epsilon \ket{x_{A_{1/4}}},
\]
\begin{equation}
\omega_{1/4}
=
\frac{
\tr_{B_{1/4}} \ket{y_{A_{1/2}}}\!\bra{y_{A_{1/2}}}
}{
\braket{y_{A_{1/2}}}{y_{A_{1/2}}}
},
\end{equation}
and
\begin{equation}
\mu_{1/4}
=
\frac{\epsilon^2 \|y_{A_{1/2}}\|^2}
{\|\ket{0_{A_{1/2}}} + \epsilon \ket{x_{A_{1/2}}}\|^2} \approx \epsilon^2 \|y_{A_{1/2}}\|^2.
\end{equation}

At long times one expects
\begin{equation}
\|x_{A_{1/2}}\|^2
=
\|\bra{0_B} O(t)\ket{0}\|^2
\sim e^{-\gamma |A_{1/2}|},
\end{equation}
since by our assumption, $O(t)\ket{0}$ carries exponentially small weight after projecting onto any product state in $B$, such as $|0_B\rangle$.  Orthogonality of the decomposition $\ket{x_{A_{1/2}}}
=
\ket{x_{A_{1/4}}}\otimes \ket{0_{B_{1/4}}}
+
\ket{y_{A_{1/2}}}$ implies $\|x_{A_{1/2}}\|^2
= \|x_{A_{1/4}}\|^2
+ \|y_{A_{1/2}}\|^2$, which then leads to
\begin{equation}
\mu_{1/4} \lesssim e^{-\gamma |A_{1/2}|}.
\end{equation}
We expect this bound to be close to saturated, since
$\ket{y_{A_{1/2}}}$ is again expected to carry most of the weight of
$\ket{x_{A_{1/2}}}$ at long times.

To determine the R\'enyi entropies, we assume that
$\omega_{1/4}$ has extensive entropy,
\begin{equation}
S_\alpha(\omega_{1/4})
\sim
s_\alpha^{(1/4)} |A_{1/4}|.
\end{equation}
Using Rotfel'd inequality for $0<\alpha<1$,
\begin{equation}
\tr(A+B)^\alpha \le \tr A^\alpha + \tr B^\alpha,
\end{equation}
we obtain
\begin{equation}
S_\alpha(\rho_{A_{1/4}})
\le
\frac{1}{1-\alpha}
\log\!\left[
(1-\mu_{1/4})^\alpha
+
\mu_{1/4}^\alpha \tr(\omega_{1/4}^\alpha)
\right].
\end{equation}

Substituting the scaling of $\mu_{1/4}$ and
$S_\alpha(\omega_{1/4})$, we find that
$S_\alpha(\rho_{A_{1/4}})$ obeys a constant law for
$\alpha \ge \alpha_c$, where
\begin{equation}
\alpha_c
=
\frac{s_{\alpha_c}^{(1/4)}}{2\gamma + s_{\alpha_c}^{(1/4)}} < 1.
\end{equation}

Note that at $\alpha = \alpha_c$, one obtains a constant law, unlike the top-level hierarchy where $\alpha_c = 1$ and $S_1(\rho_{A_{1/2}})$ is volume-law with a coefficient that scales as $\epsilon^2$ (see Appendix~\ref{sec:renyi_inequalities}). Further, for $\alpha > \alpha_c$, $S_{\alpha}$ \textit{vanishes} in the thermodynamic limit since the top Schmidt eigenvalue approaches unity asymptotically, while all others vanish. This is consistent with our numerical results, see Fig.\ref{fig:renyi_circuit_combo}(b).

The above argument applies to any bipartition where $|A_{1/4}|,|B_{1/4}|$ are extensive. If either of these two regions is $O(1)$ size, then one trivially obtains a bound that all entropies are $O(1)$. Therefore, the constant-law bound extends to all contiguous bipartitions of $A$,
which in one dimension implies efficient MPS compressibility of
$\ket{v}$~\cite{schuch2008entropy}.

\section{Critical Rényi index for nested hierarchy of Schmidt states for ``maximally scrambled'' circuits}\label{sec:scrambled-hierarchy}

In this appendix we consider a solvable limiting case in which the
state $\ket{\phi_t}:=O(t)\ket{0}$ behaves, under projection onto any
fixed product state on an extensive region, as a Haar-random state.
More precisely, for any bipartition $X\cup Y$ of the region reached by
the light cone, and any product state $\ket{s_Y}$ on $Y$, the
projected vector $\braket{s_Y}{\phi_t}$
has squared norm $\sim q^{-|Y|}$ to leading exponential accuracy, and
after normalization defines a Haar-typical state on $X$. Applying this
property recursively with $\ket{s_Y}=\ket{0_Y}$ yields the scaling of
$\mu_{2^{-j}}$ and the maximally mixed form of $\omega_{2^{-j}}$ used
below. We now derive the critical Rényi indices $\alpha_{c,j}$ for the nested hierarchy under this assumption. 

Let us first recall how the hierarchy is defined. At the top level, we decompose
\begin{equation}
O(t)\ket{0} = \ket{x}_A\otimes \ket{0_B}+\ket{y}_{AB},
\qquad
\braket{0_B}{y}=0. \label{eq:Ot0}
\end{equation}
where $|A| = |B| = N/2$. This leads to Eq.~\eqref{eq:spike-cloud-app}, with $\ket{v}\propto \ket{0_A}+\epsilon\ket{x_A}$. At the next level, we regard $\ket{v}\equiv \ket{v_{1/2}}$ as a pure state on
$A \equiv A_{1/2}=A_{1/4}\cup B_{1/4}$ and trace out $B_{1/4}$, with $|A_{1/4}| = |B_{1/4}| = N/4$. To do so, we first decompose
\begin{equation}
\ket{x}\equiv \ket{x_{A_{1/2}}}
=
\ket{x_{A_{1/4}}}\otimes \ket{0_{B_{1/4}}}
+
\ket{y_{A_{1/2}}},
\qquad
\braket{0_{B_{1/4}}}{y_{A_{1/2}}}=0.
\end{equation}
This is directly analogous to the decomposition of $O(t)\ket{0}$ at the top level. Tracing out $B_{1/4}$ then gives
\begin{equation}
\rho_{A_{1/4}}
=
\tr_{B_{1/4}}\ket{v_{1/2}}\bra{v_{1/2}}
=
(1-\mu_{1/4})\ket{v_{1/4}}\!\bra{v_{1/4}}+\mu_{1/4}\,\omega_{1/4},
\end{equation}
where $\ket{v_{1/4}}\propto \ket{0_{A_{1/4}}}+\epsilon\ket{x_{A_{1/4}}}$,
\begin{equation}
\omega_{1/4}
=
\frac{\tr_{B_{1/4}}\ket{y_{A_{1/2}}}\bra{y_{A_{1/2}}}}{\braket{y_{A_{1/2}}}{y_{A_{1/2}}}}
\end{equation}
is a normalized density matrix, and $\mu_{1/4} = \|y_{A_{1/2}}\|^2$. Recall that the key point, as mentioned in the main text and discussed in more detail in Appendix~\ref{sec:entanglement_of_v},  is that $\mu_{1/4} = \|y_{A_{1/2}}\|^2$ will generically be exponentially small in the system size which will lead to $\alpha_c < 1$ for $\rho_{A_{1/4}}$. This is because $\|x\|^2 = \|x_{A_{1/4}}\|^2 + \|y_{A_{1/2}}\|^2$, and in Eq.\ref{eq:Ot0}, by our assumption, $\|x\|^2 = \|\langle 0_B|O(t)|0\rangle\|^2$ is exponentially small in the system size.

Due to the above self-similar structure, the hierarchy can now be defined iteratively.

At level $j$, one obtains
\begin{equation}
\rho_{A_{2^{-j}}}
=
(1-\mu_{2^{-j}})\ket{v_{2^{-j}}}\!\bra{v_{2^{-j}}}
+
\mu_{2^{-j}}\,\omega_{2^{-j}},
\label{eq:spike-cloud-level-j}
\end{equation}
where $|A_{2^{-j}}|=N/2^j$ and $N$ is the total number of qudits with local Hilbert-space dimension $q$.
To determine the entropies, we need the scaling of $\mu_{2^{-j}}$ with system size. Under the aforementioned assumption, $\mu_{2^{-j}}$ has the same scaling as $\|x_{A_{2^{-(j-1)}}}\|^2$, where
\begin{equation}
\|x_{A_{2^{-(j-1)}}}\|^2
\sim \prod_{r = 1}^{j-1} q^{-N/2^r}
=
q^{-N(1-2^{-(j-1)})}.
\end{equation}
Although the product above is defined only for \(j > 1\), the final expression is valid also for $j=1$.

Indeed, at each step we write
\begin{equation}
\ket{x_{A_{2^{-(j-1)}}}}
=
\ket{x_{A_{2^{-j}}}}\otimes \ket{0_{B_{2^{-j}}}}
+
\ket{y_{A_{2^{-(j-1)}}}},
\qquad
\braket{0_{B_{2^{-j}}}}{y_{A_{2^{-(j-1)}}}}=0,
\end{equation}
and conditioning on region $B_{2^{-j}}$ contributes a factor of $q^{-N/2^j}$.

Further, under the same assumption as stated in the first paragraph of this subsection, the state $\omega_{2^{-j}}$ is approximately the identity density matrix on $N/2^j$ qudits, and therefore
\begin{equation}
\tr(\omega_{2^{-j}}^\alpha)\sim q^{(1-\alpha)N/2^j}.
\end{equation}
Using Eq.~\eqref{eq:spike-cloud-level-j}, the $0<\alpha<1$ Rényi moment is then controlled by
\begin{equation}
\tr(\rho_{A_{2^{-j}}}^\alpha)
\sim
c_1+c_2\,\mu_{2^{-j}}^\alpha \tr(\omega_{2^{-j}}^\alpha)
\sim
c_1+c_2\,
q^{-\alpha N(1-2^{-(j-1)})}\,
q^{(1-\alpha)N/2^j}
=
c_1+c_2\,
q^{\frac{N}{2^j}\left[1-(2^j-1)\alpha\right]},
\end{equation}
where $c_1,c_2=O(1)$. Hence the second term is exponentially large for $\alpha<1/(2^j-1)$ and exponentially small for $\alpha>1/(2^j-1)$, which yields
\begin{equation}
\alpha_{c,j}=\frac{1}{2^j-1}.
\end{equation}
Thus $\alpha_{c,1}=1$ for the top level, as expected from the discussion in the main text, while the next two levels give $\alpha_{c,2}=1/3$ and $\alpha_{c,3}=1/7$.

\section{Robustness of Rényi transitions across circuit ensembles}

In the main text we used Haar-random two-qubit gates as a convenient strongly scrambling circuit ensemble. To demonstrate explicitly that the observed hierarchy is not tied to Haar architecture, we also studied two other circuit ensembles described below. The first of them is a nearest-neighbor circuit built from a finite universal gate set, while the second one is a circuit that has a $U(1)$ symmetry.

\subsection{Hierarchical Rényi entropies for a circuit made of finite universal gate-set}
\label{sec:universal-gateset-circuit}

\begin{figure}[h]
\centering
\includegraphics[width=\columnwidth]{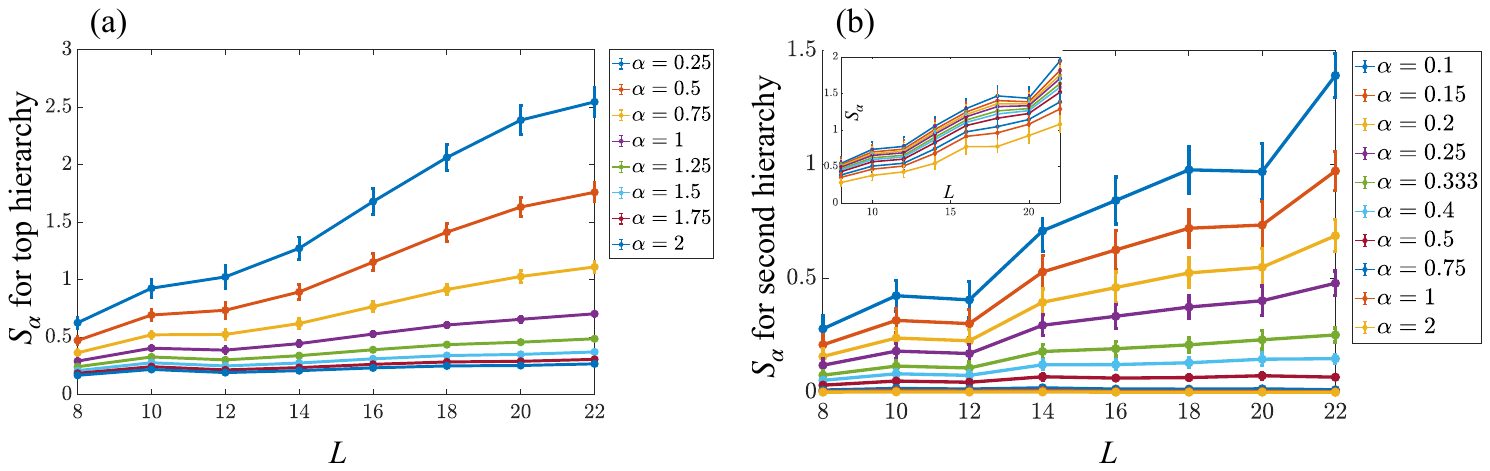}
\caption{Saturated Rényi entropies in the random circuit model with Clifford+T architecture, with $L_A = L/2, L_{A_{1/4}} = \lfloor L/4 \rfloor$, at $\epsilon = 0.1$.  The plotted entropies are averaged over thirty samples for both (a) and (b). The inset of (b) shows the Rényi entropies of the second highest Schmidt state in the second hierarchy. }
\label{fig:cliffordT_renyis}
\end{figure}

The circuit acts on a one-dimensional chain of \(L\) qubits with open boundary conditions. As in the Haar circuit, we study
\begin{equation}
|\psi(t)\rangle
=
\frac{(1+\epsilon O(t))|0\rangle}{\sqrt{\mathcal N_t}},
\qquad
O(t)=U(t)OU^\dagger(t),
\qquad
O=Z_{L/2}.
\end{equation}
The unitary \(U(t)\) is a depth-\(t\) random Clifford+\(T\) brickwork circuit. One timestep consists of two layers of single-site gates interleaved with two layers of nearest-neighbor CZ gates. More explicitly, for each timestep we choose with probability \(1/2\) whether the odd or even CZ layer is applied first. If \(\sigma_t\in\{\mathrm{odd},\mathrm{even}\}\) denotes this choice and \(\bar\sigma_t\) the complementary parity, then
\begin{equation}
U_t
=
U_{\mathrm{CZ}}^{\bar\sigma_t}
\,R^{(2)}_t\,
U_{\mathrm{CZ}}^{\sigma_t}
\,R^{(1)}_t .
\end{equation}
Here \(U_{\mathrm{CZ}}^{\mathrm{odd}}\) applies CZ gates on bonds
\((1,2),(3,4),\ldots\), while \(U_{\mathrm{CZ}}^{\mathrm{even}}\) applies CZ gates on bonds
\((2,3),(4,5),\ldots\). The random choice of \(\sigma_t\) helps in restoring statistical translational invariance.

Each single-site layer \(R^{(a)}_t\) is a product of independent one-qubit gates. On every site we first choose a gate uniformly from the finite Clifford subset $\{\mathds 1,H,S,S^\dagger,X,Z\}$
and then independently apply a non-Clifford gate with probability \(p_T\). When a non-Clifford gate is applied, it is chosen with equal probability from
\begin{equation}
T=\begin{pmatrix}1&0\\0&e^{i\pi/4}\end{pmatrix},
\qquad
T^\dagger=\begin{pmatrix}1&0\\0&e^{-i\pi/4}\end{pmatrix}.
\end{equation}
In the numerics we use $p_T=\frac12$.
For any fixed \(p_T>0\), this gives a local circuit drawn from a finite universal gate set. The case \(p_T=0\) reduces to a Clifford circuit.

We computed the same diagnostics as in the Haar-random circuit ensemble. First, we formed
\(\rho_{A_{1/2}}\) for the half-chain cut and computed its Rényi entropies. Second, we extracted the two largest Schmidt vectors of \(\rho_{A_{1/2}}\), bipartitioned the corresponding half-chain state into \(A_{1/4}|B_{1/4}\), and computed the resulting nested Rényi entropies. The results are shown in Fig.~\ref{fig:cliffordT_renyis}. They are qualitatively the same as for the Haar-random circuit: the full state (i.e., the top hierarchy) exhibits an area-to-volume-law transition at \(\alpha_c=1\), while the leading Schmidt state (i.e., the top state in the second hierarchy) exhibits a transition at \(\alpha_c<1\). The second Schmidt state in the second hierarchy instead shows volume-law behavior for all Rényi indices studied, consistent with it belonging to the  scrambled state $\omega$.

\subsection{Hierarchical Rényi entropies in a U(1) symmetric circuit} \label{sec:U1circuit}

\begin{figure}[h]
\centering
\includegraphics[width=\textwidth]{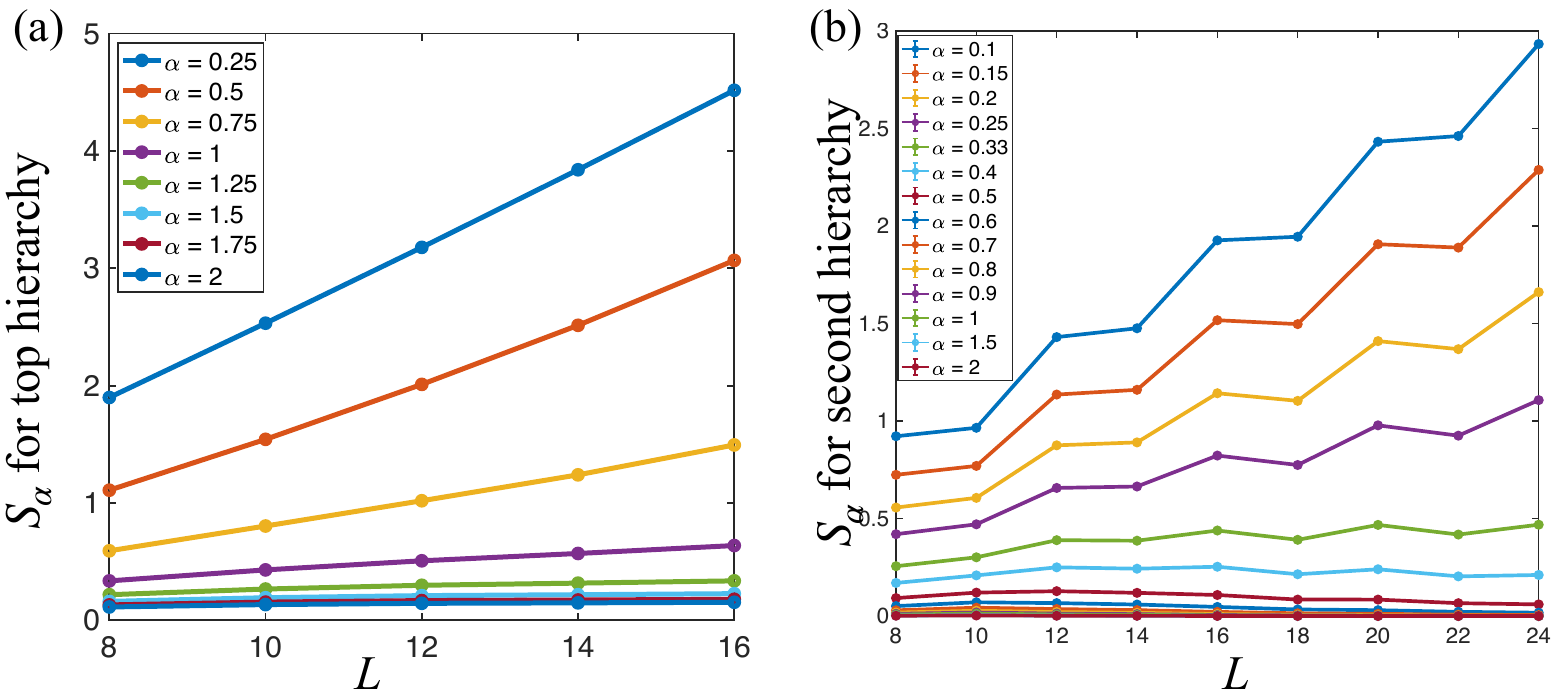}
\caption{Saturated Rényi entropies of the top Schmidt state corresponding to the top two hierarchies for the random circuit model with U(1) symmetry discussed in Appendix \ref{sec:U1circuit} with $\epsilon = 0.4$, $L_A = L/2, L_{A_{1/4}} = \lfloor L/4 \rfloor$. The plots are obtained by averaging over ten instances of the random circuit.}
\label{fig:U1renyis}
\end{figure}

The circuit we consider is based on Ref.\cite{khemani2018operator}.
The circuit is composed of nearest-neighbor two-site gates in the brickwork structure (i.e. one first applies gates on all odd bonds and then on even bonds). Each two-site gate conserves $\sum_i Z_i$. In the basis $\{\ket{00},\ket{01},\ket{10},\ket{11}\}$,
the gate has the block form
\begin{equation}
G =
e^{i\theta_{00}} \ket{00}\bra{00}
+
e^{i\theta_{11}} \ket{11}\bra{11}
+
\sum_{\alpha,\beta\in\{01,10\}}
V_{\alpha\beta}\ket{\alpha}\bra{\beta},
\label{eq:u1gate}
\end{equation}
where \(\theta_{00},\theta_{11}\) are random phases and \(V\in U(2)\) is Haar-random.

Similar to the case without symmetry, for a given $U(1)$ preserving circuit $U(t)$, we first define the Heisenberg operator $ O(t)=U(t)\, Z_{L/2}\, U^\dagger(t)$, where $Z_{L/2}$ is the
Pauli $Z$ operator on site $L/2$, and using this, we define the state of our main interest:
\begin{equation}
\ket{\psi_\Phi(t)}
=
\frac{(\I + \eps O(t))\ket{\Phi}}{\sqrt{\cN_t}},
\qquad
\cN_t = 1+\eps^2 + 2\eps a_t,
\label{eq:psit}
\end{equation}
where $\ket{\Phi}$ is a product state, and $a_t = \bra{\Phi} O(t)\ket{\Phi}$. In addition to the randomness in the circuit due to $U(2)$ Haar-random gates $V$, we also allow initial state $|\Phi\rangle$ to have randomness in the following sense: the center site is fixed in the \(+1\) eigenstate of \(Z_{L/2}\), namely \(\ket{0}\), and every other site is chosen independently from the \(X\)-basis states \(\ket{+},\ket{-}\):
\[
\ket{\Phi}
=
\ket{0}_{L/2} \otimes \bigotimes_{j\neq L/2} \ket{\eta_j}_j,
\qquad
 \eta_j\in\{+,-\}.
\] One advantage of this ensemble of initial states is that the ensemble average of $a_t$ is precisely the infinite-temperature autocorrelator of $Z_{L/2}$ which is expected to show diffusive behavior: \[
\overline{a_t}
=
\overline{\bra{\Phi}Z_{L/2}(t)\ket{\Phi}}
=
\frac{1}{2^L}\tr\!\big[(\mathbbm{1}+Z_{L/2})Z_{L/2}(t)\big]
=
\frac{1}{2^L}\tr\!\big[Z_{L/2} Z_{L/2}(t)\big].
\] 

The results for the Rényi entropies for the first two hierarchies are shown in Fig.\ref{fig:U1renyis}. They are again consistent with $\alpha_c = 1$ for the top hierarchy, and $\alpha_c < 1$ for the second hierarchy.

\section{Area-law Rényi entropies \texorpdfstring{$S_{\alpha>1}$}{S alpha>1} of $|\sqrt{\rho_{\beta,\theta}(t)}\rangle$ at  \texorpdfstring{$\beta,\theta=O(1)$}{beta,theta=O(1)}} \label{sec:renyisGibbs}

In this section we show that, assuming only that the Gibbs canonical purification  $|\sqrt{\rho_{\beta}}\rangle$ has area-law Rényi entropy $S_\alpha$ for Rényi-index $\alpha = 1/2$, the canonical purification  $|\sqrt{\rho_{\beta,\theta}(t)}\rangle$ of the quenched density matrix necessarily satisfies an area law of entanglement for Rényi indices $\alpha > 1$: \(S_{\alpha>1} \sim c L^{d-1}_A\) for all \(t\). 

The basic idea of the proof is that the overlap between $|\sqrt{\rho_{\beta}}\rangle$ and $|\sqrt{\rho_{\beta,\theta}(t)}\rangle$ remains non-zero and $O(1)$ for all times even in the thermodynamic limit, which implies that there always exists one large Schmidt eigenvalue of $|\sqrt{\rho_{\beta,\theta}(t)}\rangle$ for all times $t$. This large eigenvalue then leads to area-law Rényi entropy $S_{\alpha > 1}$ for the quenched state.

\subsection{$O(1)$ time-independent overlap between $|\sqrt{\rho_{\beta}}\rangle $ and $|\sqrt{\rho_{\beta,\theta}(t)}\rangle $}

Using the definition of the canonical purification, the overlap between two canonical purifications $|\sqrt{\rho_1}\rangle$ and $|\sqrt{\rho_2}\rangle$ is equal to $\tr( \sqrt{\rho_1} \sqrt{\rho_2})$. Therefore the overlap of  interest is:

\bea 
\langle \sqrt{\rho_{\beta}}|\sqrt{\rho_{\beta,\theta}(t)}\rangle
& = & \tr(\sqrt{\rho_{\beta}}\sqrt{\rho_{\beta,\theta}(t)}) \\
& = & \frac{1}{Z(\beta)}\tr\!\left(
e^{-\beta H/2} e^{i Ht} U^{\dagger}_\theta(x) e^{-\beta H/2} U_\theta(x) e^{-i Ht} \right) \\
& = & \frac{1}{Z(\beta)}
\tr\!\left(
e^{-\beta H/2} U_\theta^\dagger(x) e^{-\beta H/2}U_\theta(x)
\right).
\eea 

Note that the time-dependence completely drops out after taking trace. 

Due to $U_\theta(x)$ being a local unitary, one may write

\[
U_\theta^\dagger(x) H U_\theta(x) = H+\Delta H,
\qquad
\Delta H := U_\theta^\dagger(x) H U_\theta(x) - H.
\]

where \(\normop{\Delta H}=O(1)\).
Then
\be 
\langle \sqrt{\rho_{\beta}}|\sqrt{\rho_{\beta,\theta}(t)}\rangle
=
\frac{1}{Z(\beta)}
\tr\!\left(
e^{-\beta H/2} e^{-\beta(H+\Delta H)/2}
\right). \label{eq:overlap1}
\ee

\begin{proposition}
For any local unitary \(U_\theta(x)\),
\[
\langle \sqrt{\rho_{\beta}}|\sqrt{\rho_{\beta,\theta}(t)}\rangle \ge e^{-\beta \normop{\Delta H}/2}.
\]
Since \(\normop{\Delta H}=O(1)\), this implies that the overlap $\langle \sqrt{\rho_{\beta}}|\sqrt{\rho_{\beta,\theta}(t)}\rangle$ is bounded from below by an $O(1)$ time-independent constant for all time $t$. \label{prop:overlapbound}
\end{proposition}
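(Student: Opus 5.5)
We need to show $\tr(e^{-\beta H/2}e^{-\beta(H+\Delta H)/2})/Z(\beta)\ge e^{-\beta\normop{\Delta H}/2}$, starting from Eq.~\eqref{eq:overlap1}. The plan is to compare $e^{-\beta(H+\Delta H)/2}$ with $e^{-\beta H/2}$ from below, but not pointwise as operators. Exponentials of non-commuting operators are not operator monotone, so the naive inequality $e^{-\beta(H+\Delta H)/2}\ge e^{-\beta\normop{\Delta H}/2}e^{-\beta H/2}$ is not available. Instead I would work with the trace, where convexity tools apply.

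\textbf{Step 1: Peierls--Bogoliubov / Gibbs variational principle.} Define the normalized state $\sigma = e^{-\beta H}/Z(\beta)$. Using cyclicity, the quantity of interest is
\[
F := \frac{1}{Z(\beta)}\tr\!\left(e^{-\beta H/4}e^{-\beta(H+\Delta H)/2}e^{-\beta H/4}\right),
\]
which is manifestly positive. A direct application of Peierls--Bogoliubov would need a single exponential, so I would first reduce to one. Write $e^{-\beta(H+\Delta H)/2}$ through the positive operator $U_\theta^\dagger e^{-\beta H/2}U_\theta$, so that $F=\tr(\sqrt{\rho_\beta}\,U_\theta^\dagger\sqrt{\rho_\beta}U_\theta)$.

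\textbf{Step 2: Golden--Thompson-type reverse.} The cleanest route I would try is the operator inequality
\[
e^{-\beta(H+\Delta H)/2}\ \ge\ ?
\]
This fails in general, so I would instead use the Araki--Lieb--Thirring/convexity argument on the function $f(s)=\log\tr\!\left(e^{-\beta H/2}e^{-\beta(H+s\Delta H)/2}\right)$. Differentiating with Duhamel's formula gives
\[
f'(s) = -\frac{\beta}{2}\,\frac{\int_0^1 \tr\!\left(e^{-\beta H/2}e^{-u\beta(H+s\Delta H)/2}\Delta H e^{-(1-u)\beta(H+s\Delta H)/2}\right)du}{\tr\!\left(e^{-\beta H/2}e^{-\beta(H+s\Delta H)/2}\right)}.
\]
The key point is that $\Delta H$ is sandwiched between positive operators. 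To bound $|f'(s)|\le \beta\normop{\Delta H}/2$, I need each numerator term to be bounded by $\normop{\Delta H}$ times a positive trace that integrates to the denominator.

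\textbf{Main obstacle.} The integrand $\tr(PXQ\,\Delta H\,R)$, with $P,X,Q,R$ positive but non-commuting, is not obviously controlled by $\normop{\Delta H}\tr(PXQR)$. I would handle it by Hölder with the weights $e^{-u\cdots}$ split symmetrically. If that fails, the fallback is a simpler route using monotonicity of $s\mapsto e^{-\beta(H+s\Delta H)/2}$ in a different form. Since $H+\Delta H\le H+\normop{\Delta H}$ and $t\mapsto t^{r}$ is operator monotone only for $r\in[0,1]$, I would instead use
\[
\tr\!\left(A\,e^{-\beta K/2}\right)\ \ge\ \tr\!\left(A\,e^{-\beta K'/2}\right)\quad\text{for } K\le K',\ A\ge0.
\]
This inequality is true whenever the exponent map is trace-monotone against the fixed positive $A=e^{-\beta H/2}$. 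To justify it, write $e^{-\beta K/2}=\big(e^{-\beta K/(2n)}\big)^n$ and use the Lie--Trotter product formula. Taking $K=H+\Delta H$ and $K'=H+\normop{\Delta H}\I$ then gives $F\ge e^{-\beta\normop{\Delta H}/2}\tr(e^{-\beta H})/Z(\beta)=e^{-\beta\normop{\Delta H}/2}$. Justifying this trace monotonicity, for instance via the Duhamel derivative along $K_s=K+s(K'-K)$ together with positivity of $\tr(A\,e^{-u\beta K_s/2}(K'-K)e^{-(1-u)\beta K_s/2})$, is the step I expect to require the most care. The positivity of that last trace is itself not automatic for $u\neq1/2$, so I would first try to symmetrize via cyclicity.
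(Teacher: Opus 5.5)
Your proposal does not prove the inequality. Steps 1--2 are left open; you flag the Duhamel bound on $f'(s)$ as unresolved yourself. The fallback then rests on the claim $\tr(A e^{-\beta K/2})\ge\tr(A e^{-\beta K'/2})$ for $K\le K'$, $A\ge 0$, which you never establish.

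For an arbitrary fixed $A\ge0$ this claim is false. Testing it on all rank-one projectors $A=\ket{\psi}\bra{\psi}$ would give $e^{-\beta K/2}\ge e^{-\beta K'/2}$, i.e.\ operator monotonicity of the exponential, which you correctly note fails. In the only case you need ($A=e^{-\beta H/2}$, $K=H+\Delta H$, $K'=H+\normop{\Delta H}\I$) it is verbatim the proposition, so invoking it is circular.

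Neither suggested justification closes the loop. A Lie--Trotter splitting produces $\tr\bigl(A(PB)^n\bigr)$ with $P=e^{-\beta H/(2n)}$ and $B=e^{-\beta\Delta H/(2n)}$. You cannot replace each $B$ by the scalar $e^{-\beta\normop{\Delta H}/(2n)}$, because operator inequalities survive congruences $Z^\dagger(\cdot)Z$ but not insertion into such non-palindromic products. The Duhamel integrand $\tr\bigl(A e^{-u\beta K_s/2}(K'-K)e^{-(1-u)\beta K_s/2}\bigr)$ is not sign-definite for $u\neq 1/2$. Since $K_s$ does not commute with $A$ for $s<1$, cyclicity cannot symmetrize it, and you give no argument that the $u$-integral is nonnegative.

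The missing idea is that $A=e^{-\beta H/2}$ commutes with the comparison operator $H+\normop{\Delta H}\I$, so you should take the trace in the eigenbasis $\{\ket{n}\}$ of $H$. Then $F=Z(\beta)^{-1}\sum_n e^{-\beta E_n/2}\bra{n}e^{-\beta(H+\Delta H)/2}\ket{n}$ involves only diagonal matrix elements, with nonnegative weights. For each of them, Jensen's inequality for the spectral measure of $H+\Delta H$ in the state $\ket{n}$ gives $\bra{n}e^{X}\ket{n}\ge e^{\bra{n}X\ket{n}}$ with $X=-\beta(H+\Delta H)/2$. Combined with $\bra{n}H+\Delta H\ket{n}\le E_n+\normop{\Delta H}$, summing over $n$ gives $F\ge e^{-\beta\normop{\Delta H}/2}$. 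This is the paper's proof, and the same argument proves your trace-monotonicity lemma whenever $A$ commutes with $K'$.

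Alternatively, your Step 1 was one line from done, and no reverse Golden--Thompson is needed. The ordinary inequality $\tr(e^{X}e^{Y})\ge\tr e^{X+Y}$ already points the right way and gives $F\ge\tr e^{-\beta(H+\Delta H/2)}/Z(\beta)$. Peierls--Bogoliubov then yields $F\ge e^{-\beta\tr(\rho_\beta\Delta H)/2}\ge e^{-\beta\normop{\Delta H}/2}$.
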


\begin{proof}
Let's write the right-hand side of Eq.\ref{eq:overlap1} in the eigenbasis \(\{|n\rangle\}\) of \(H\), with \(H|n\rangle=E_n|n\rangle\). Then

\be 
\frac{1}{Z(\beta)}
\tr\!\left(
e^{-\beta H/2} e^{-\beta(H+\Delta H)/2}
\right) = \frac{1}{Z(\beta)} \sum_n e^{-\beta E_n/2}\,\langle n|e^{-\beta (H+\Delta H)/2}|n\rangle. \label{eq:overlap2}
\ee 

Using the convexity of the exponential, for any Hermitian \(X\) and normalized vector \(|\psi\rangle\),
\[
\langle \psi|e^X|\psi\rangle \ge e^{\langle \psi|X|\psi\rangle},
\]

Applying this to Eq.\ref{eq:overlap2} with $X=-\frac{\beta}{2}(H+\Delta H),\,\,|\psi\rangle=|n\rangle$,
gives
\[
\langle n|e^{-\beta (H+\Delta H)/2}|n\rangle
\ge
e^{-\beta\langle n|H+\Delta H|n\rangle/2}.
\]
Since
\[
\langle n|H+\Delta H|n\rangle
=
E_n+\langle n|\Delta H|n\rangle
\le
E_n+\normop{\Delta H},
\]
we obtain

\bea
\langle \sqrt{\rho_{\beta}}|\sqrt{\rho_{\beta,\theta}(t)}\rangle
& = & 
\frac{1}{Z(\beta)}
\tr\!\left(
e^{-\beta H/2} e^{-\beta(H+\Delta H)/2}
\right). \\
& = &  \frac{1}{Z(\beta)} \sum_n e^{-\beta E_n/2}\,\langle n|e^{-\beta (H+\Delta H)/2}|n\rangle\\
& \geq & \frac{1}{Z(\beta)} \sum_n e^{-\beta E_n/2} e^{-\beta(E_n+\normop{\Delta H})/2}\\
& = & e^{-\beta\normop{\Delta H}/2}.
\label{eq:overlap3}
\eea

\end{proof}
\begin{remark}
In the mixed-field Ising model used in the numerics in the main text, $H = \sum_{i=1}^{L-1} Z_i Z_{i+1}+ \sum_{i=1}^{L} (g X_i + h Z_i) + Z_1/4 - Z_L/4$, and \(U_\theta = e^{i \theta Z_{x_0}}\), one finds
\[
\Delta H = g\left(U_\theta^\dagger X_{x_0} U_\theta - X_{x_0}\right),
\]
A simple calculation yields
\[
\normop{\Delta H}
=
2|g|\left|\sin(\theta)\right|,
\]
and therefore
\[
\langle \sqrt{\rho_{\beta}}|\sqrt{\rho_{\beta,\theta}(t)}\rangle \ge \exp\!\left[-\beta |g| \left|\sin(\theta)\right|\right].
\]
\end{remark}

\subsection{Area-law Rényis \texorpdfstring{$S_{\alpha>1}$}{S alpha>1} for $|\sqrt{\rho_{\beta,\theta}(t)}\rangle$}

We now show that the time-independent overlap established in Eq.\ref{eq:overlap3} directly implies an area-law upper bound on the Rényi entropies $S_{\alpha>1}$ of $|\sqrt{\rho_{\beta,\theta}(t)}\rangle$. The argument is similar to that used in Ref.~\cite{muth2011dynamical} for operator entanglement, but we repeat it here for completeness in our present notation.

We bipartition the Hilbert space into the standard $A_sA_a|B_sB_a$. Let
\[
\ket{\phi}=\sum_i \sqrt{q_i}\,\ket{s_i}_A\ket{t_i}_B,\qquad
\ket{\psi}=\sum_j \sqrt{p_j}\,\ket{u_j}_A\ket{v_j}_B
\]
be the Schmidt decompositions of two bipartite pure states, with $q_1\ge q_2\ge\cdots$ and $p_1\ge p_2\ge\cdots$.

\begin{lemma}
For any $\alpha>1$,
\[
S_\alpha(\rho_A^\psi)
\le
S_{1/2}(\rho_A^\phi)
-\frac{2\alpha}{\alpha-1}\log |\braket{\phi}{\psi}|,
\]
where $\rho_A^\phi=\tr_B\ket{\phi}\bra{\phi}$ and $\rho_A^\psi=\tr_B\ket{\psi}\bra{\psi}$.
\end{lemma}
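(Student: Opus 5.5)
The plan is to lower-bound $\tr(\rho_A^\psi)^\alpha$ directly in terms of the overlap and a Rényi moment of $\rho_A^\phi$, using a Hölder split tuned so that the power on $\tr(\rho_A^\psi)^\alpha$ comes out exactly right. The earlier route of bounding only the top Schmidt weight $p_1$ is deliberately avoided, because it produces the extra factor $\alpha/(\alpha-1)$ on $S_{1/2}$.

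\textbf{Step 1: overlap versus Schmidt coefficients.} Write $\ket{\phi}=\sum_{ab}X_{ab}\ket{a}_A\ket{b}_B$ and $\ket{\psi}=\sum_{ab}Y_{ab}\ket{a}_A\ket{b}_B$ in fixed product bases. Then $\braket{\phi}{\psi}=\tr(X^\dagger Y)$, and the singular values of $X$ and $Y$ are $\sqrt{q_i}$ and $\sqrt{p_i}$. Von Neumann's trace inequality gives
\[
|\braket{\phi}{\psi}|\le\sum_i \sqrt{q_i}\sqrt{p_i},
\]
with both sequences sorted decreasingly.

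\textbf{Step 2: Hölder with tuned exponents.} Apply Hölder with conjugate exponents $r=2\alpha/(2\alpha-1)$ and $s=2\alpha$, so that $1/r+1/s=1$. This gives
\[
|\braket{\phi}{\psi}|\le\Big(\sum_i q_i^{\gamma}\Big)^{\frac{2\alpha-1}{2\alpha}}\Big(\sum_i p_i^{\alpha}\Big)^{\frac{1}{2\alpha}},\qquad \gamma:=\frac{\alpha}{2\alpha-1}\in(1/2,1)\ \text{for}\ \alpha>1.
\]
Raising to the power $2\alpha$ and rearranging yields
\[
\tr(\rho_A^\psi)^\alpha\ge|\braket{\phi}{\psi}|^{2\alpha}\big(\tr(\rho_A^\phi)^\gamma\big)^{-(2\alpha-1)}.
\]

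\textbf{Step 3: convert to entropies.} Substitute $\tr(\rho_A^\phi)^\gamma=e^{(1-\gamma)S_\gamma(\rho_A^\phi)}$. Since $(1-\gamma)(2\alpha-1)=\alpha-1$, this gives
\[
\tr(\rho_A^\psi)^\alpha\ge|\braket{\phi}{\psi}|^{2\alpha}e^{-(\alpha-1)S_\gamma(\rho_A^\phi)}.
\]
Take logarithms and divide by $1-\alpha<0$, which flips the inequality:
\[
S_\alpha(\rho_A^\psi)\le S_\gamma(\rho_A^\phi)-\frac{2\alpha}{\alpha-1}\log|\braket{\phi}{\psi}|.
\]

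\textbf{Step 4: reduce to $S_{1/2}$.} Rényi entropies are nonincreasing in the index, and $\gamma\ge 1/2$, so $S_\gamma(\rho_A^\phi)\le S_{1/2}(\rho_A^\phi)$. Inserting this into Step 3 gives exactly the stated bound. The inequality with $S_\gamma$ is in fact slightly stronger than the lemma.

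The main obstacle is the choice of Hölder exponents in Step 2. The exponent on $\tr(\rho_A^\psi)^\alpha$ has to be $1/(2\alpha)$, so that after clearing powers the coefficient of the $\rho_A^\phi$ entropy is exactly one. Any other split, including the extreme $s=\infty$ that isolates $p_1$, inflates that coefficient. The remaining ingredients, namely the trace inequality in Step 1 and the monotonicity in Step 4, are standard.
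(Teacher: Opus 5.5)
Your proof is correct, and it differs from the paper's in the key middle step. Both arguments start from the von Neumann trace inequality $|\braket{\phi}{\psi}|\le\sum_i\sqrt{q_ip_i}$. The paper then sets $Q=\sum_i\sqrt{q_i}$ and $w_i=\sqrt{q_i}/Q$, and applies Jensen to the concave map $x\mapsto x^{1/\alpha}$ to get $\sum_i w_i p_i^{1/2}\le(\sum_i w_i p_i^{\alpha/2})^{1/\alpha}$. It closes with Cauchy--Schwarz against $\sum_i q_i=1$. This lands directly on $|\braket{\phi}{\psi}|\le Q^{1-1/\alpha}(\sum_i p_i^\alpha)^{1/(2\alpha)}$ with $Q^2=e^{S_{1/2}(\rho_A^\phi)}$, so no monotonicity in the R\'enyi index is needed. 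Your single H\"older step with exponents $(2\alpha/(2\alpha-1),\,2\alpha)$ replaces that prefactor by $e^{\frac{\alpha-1}{2\alpha}S_\gamma(\rho_A^\phi)}$. This is never larger than the paper's $e^{\frac{\alpha-1}{2\alpha}S_{1/2}(\rho_A^\phi)}$, and it is strictly smaller unless the Schmidt spectrum of $\ket{\phi}$ is flat. You therefore prove the stronger inequality $S_\alpha(\rho_A^\psi)\le S_{\alpha/(2\alpha-1)}(\rho_A^\phi)-\frac{2\alpha}{\alpha-1}\log|\braket{\phi}{\psi}|$, and use monotonicity only to relax it to the stated form. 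In the Gibbs-state application, this means that for a fixed $\alpha>1$ an area law for the thermal purification at index $\alpha/(2\alpha-1)\in(1/2,1)$ already suffices (e.g.\ $S_{2/3}$ for $\alpha=2$), whereas the paper's route assumes $S_{1/2}$. Your required index approaches $1/2$ only as $\alpha\to\infty$, which is the limit used for the largest-Schmidt-eigenvalue corollary. The paper's chain buys a right-hand side with a single $\alpha$-independent entropy without invoking monotonicity, at the cost of sharpness.

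One small inaccuracy in your closing remark concerns splits with $2<s<2\alpha$. These do not inflate the coefficient of the $\rho_A^\phi$ entropy, which stays equal to one via $S_\alpha(\rho_A^\psi)\le S_{s/2}(\rho_A^\psi)$. They instead inflate the overlap coefficient to $\frac{2s}{s-2}>\frac{2\alpha}{\alpha-1}$. Only splits with $s>2\alpha$, including $s=\infty$, inflate the entropy coefficient.
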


\begin{proof}
Writing $\ket{\phi},\ket{\psi}$ in a product basis on $A,B$ as
\[
\ket{\phi}=\sum_{ij}\Phi_{ij}\ket{a_i}\ket{b_j},
\qquad
\ket{\psi}=\sum_{ij}\Psi_{ij}\ket{a_i}\ket{b_j},
\]
their overlap is
\[
\braket{\phi}{\psi}=\tr(\Phi^\dagger\Psi).
\]
The singular values of $\Phi,\Psi$ are $\{\sqrt{q_i}\}$ and $\{\sqrt{p_i}\}$ respectively, so by the von Neumann trace inequality,
\[
|\braket{\phi}{\psi}|
=
|\tr(\Phi^\dagger\Psi)|
\le
\sum_i \sqrt{q_i p_i}.
\]
Let
\[
Q:=\sum_i \sqrt{q_i},
\qquad
w_i:=\frac{\sqrt{q_i}}{Q},
\qquad \sum_i w_i=1.
\]
Then
\[
|\braket{\phi}{\psi}|
\le
Q\sum_i w_i\,p_i^{1/2}.
\]
Since $\alpha>1$, the function $x^{1/\alpha}$ is concave on $x\ge 0$, and Jensen's inequality gives
\[
\sum_i w_i\,p_i^{1/2}
=
\sum_i w_i\,(p_i^{\alpha/2})^{1/\alpha}
\le
\left(\sum_i w_i\,p_i^{\alpha/2}\right)^{1/\alpha}.
\]
Therefore
\[
|\braket{\phi}{\psi}|
\le
Q\left(\sum_i w_i\,p_i^{\alpha/2}\right)^{1/\alpha}.
\]
Using the definition of $w_i$ and then Cauchy--Schwarz,
\[
\sum_i w_i\,p_i^{\alpha/2}
=
\frac{1}{Q}\sum_i \sqrt{q_i}\,p_i^{\alpha/2}
\le
\frac{1}{Q}\left(\sum_i q_i\right)^{1/2}\left(\sum_i p_i^\alpha\right)^{1/2}.
\]
Since $\sum_i q_i=1$, this becomes
\[
\sum_i w_i\,p_i^{\alpha/2}
\le
\frac{1}{Q}\left(\sum_i p_i^\alpha\right)^{1/2}.
\]
Substituting back, we obtain
\[
|\braket{\phi}{\psi}|
\le
Q^{1-1/\alpha}\left(\sum_i p_i^\alpha\right)^{1/(2\alpha)}.
\]
Now,
\[
Q^2=\left(\sum_i \sqrt{q_i}\right)^2=e^{S_{1/2}(\rho_A^\phi)},
\qquad
\sum_i p_i^\alpha=e^{(1-\alpha)S_\alpha(\rho_A^\psi)}.
\]
Hence
\[
|\braket{\phi}{\psi}|
\le
\exp\!\left[
\frac{\alpha-1}{2\alpha}S_{1/2}(\rho_A^\phi)
-\frac{\alpha-1}{2\alpha}S_\alpha(\rho_A^\psi)
\right].
\]
Taking logarithms and rearranging proves
\[
S_\alpha(\rho_A^\psi)
\le
S_{1/2}(\rho_A^\phi)
-\frac{2\alpha}{\alpha-1}\log |\braket{\phi}{\psi}|.
\]
\end{proof}

We now apply the Lemma to
\[
\ket{\phi}=|\sqrt{\rho_\beta}\rangle,
\qquad
\ket{\psi}=|\sqrt{\rho_{\beta,\theta}(t)}\rangle.
\]
Then
\[
\rho_A^\phi=\rho_{A,\beta},
\qquad
\rho_A^\psi=\rho_{A,\beta,\theta}(t),
\]
and therefore
\[
S_\alpha(\rho_{A,\beta,\theta}(t))
\le
S_{1/2}(\rho_{A,\beta})
-\frac{2\alpha}{\alpha-1}
\log\!\left|
\langle \sqrt{\rho_\beta}|\sqrt{\rho_{\beta,\theta}(t)}\rangle
\right|.
\]
Using the overlap bound from Eq.\ref{eq:overlap3},
\[
\left|
\langle \sqrt{\rho_\beta}|\sqrt{\rho_{\beta,\theta}(t)}\rangle
\right|
\ge e^{-\beta\normop{\Delta H}/2},
\]
we obtain
\[
S_\alpha(\rho_{A,\beta,\theta}(t))
\le
S_{1/2}(\rho_{A,\beta})
+
\frac{\alpha}{\alpha-1}\beta\normop{\Delta H}.
\]

We now assume the following:

\begin{assumption}[Area law for the Gibbs canonical purification]
For fixed \(\beta<\infty\), the reduced density matrix $\rho_{A,\beta} = \tr_B |\sqrt{\rho_\beta}\rangle \langle \sqrt{\rho_\beta}|$ for the canonical purification of the Gibbs state satisfies
\[
S_{1/2}(\rho_{A,\beta}) \le c L^{d-1}_A
\]
uniformly in \(L_A\) across the cut \(A_sA_a|B_sB_a\), where $c = O(1)$. \label{assume:shalfthermal}
\end{assumption}

\begin{corollary}
Assume that the Gibbs canonical purification satisfies Assumption \ref{assume:shalfthermal}, namely
\[
S_{1/2}(\rho_{A,\beta})\le cL_A^{d-1}.
\]
Then for every $\alpha>1$,
\[
S_\alpha(\rho_{A,\beta,\theta}(t))
\le
cL_A^{d-1}
+
\frac{\alpha}{\alpha-1}\beta\normop{\Delta H}.
\]
In particular, $|\sqrt{\rho_{\beta,\theta}(t)}\rangle$ satisfies an area law for all Rényi indices $\alpha>1$, uniformly in time.
\end{corollary}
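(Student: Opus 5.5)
The corollary follows by chaining the Lemma with Proposition~\ref{prop:overlapbound} and then inserting Assumption~\ref{assume:shalfthermal}. Set $\ket{\phi}=|\sqrt{\rho_\beta}\rangle$ and $\ket{\psi}=|\sqrt{\rho_{\beta,\theta}(t)}\rangle$, with the bipartition $A_sA_a|B_sB_a$.

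First, I will check that both vectors are normalized, since the Lemma uses $\sum_i q_i=\sum_i p_i=1$. The norm squared of $|\sqrt{\rho}\rangle$ is $\tr\rho=1$, and both $\rho_\beta$ and $\rho_{\beta,\theta}(t)$ are unit-trace density matrices, being unitary conjugates of a Gibbs state.

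Second, I will apply the Lemma for fixed $\alpha>1$:
\[
S_\alpha(\rho_{A,\beta,\theta}(t))\le S_{1/2}(\rho_{A,\beta})-\tfrac{2\alpha}{\alpha-1}\log|\langle\sqrt{\rho_\beta}|\sqrt{\rho_{\beta,\theta}(t)}\rangle|.
\]

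Third, I will bound the overlap term. Proposition~\ref{prop:overlapbound} shows that the overlap equals $Z(\beta)^{-1}\tr(e^{-\beta H/2}e^{-\beta(H+\Delta H)/2})$, independent of $t$. This quantity is the trace of a product of two positive operators, so it is real and nonnegative, and the Proposition bounds it below by $e^{-\beta\|\Delta H\|/2}>0$. Its absolute value therefore obeys the same bound, and
\[
-\log|\cdot|\le \beta\|\Delta H\|/2 .
\]
Multiplying by $\tfrac{2\alpha}{\alpha-1}>0$ gives the additive term $\tfrac{\alpha}{\alpha-1}\beta\|\Delta H\|$.

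Fourth, I will replace $S_{1/2}(\rho_{A,\beta})$ by $cL_A^{d-1}$ using Assumption~\ref{assume:shalfthermal}. Neither term depends on $t$, so the bound is uniform in time. Because $\|\Delta H\|=O(1)$ for a local unitary acting on a spin system, the right-hand side scales as $L_A^{d-1}$, which is the stated area law for every $\alpha>1$.

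There is no real obstacle here; the only points to watch are the sign and reality of the overlap and the normalization conventions. Note that the constant diverges as $\alpha\to1^+$, which is consistent with the volume law for $S_1$.
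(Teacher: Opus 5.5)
Your proposal is correct and follows the paper's argument exactly: you apply the Lemma with $\ket{\phi}=|\sqrt{\rho_\beta}\rangle$ and $\ket{\psi}=|\sqrt{\rho_{\beta,\theta}(t)}\rangle$, bound the time-independent overlap from below by $e^{-\beta\normop{\Delta H}/2}$ via Proposition~\ref{prop:overlapbound}, and then insert Assumption~\ref{assume:shalfthermal}. Your additional checks (unit normalization of both purifications, and that the overlap is real and nonnegative as the trace of a product of two positive operators) are valid details that the paper leaves implicit.
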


\begin{corollary}[Large Schmidt eigenvalue]
Let $p_1$ denote the largest eigenvalue of
\[
\rho_{A,\beta,\theta}(t)=\tr_B |\sqrt{\rho_{\beta,\theta}(t)}\rangle\langle \sqrt{\rho_{\beta,\theta}(t)}|.
\]
Under Assumption \ref{assume:shalfthermal},
\[
p_1 \ge \exp\!\left[-cL_A^{d-1}-\beta\normop{\Delta H}\right].
\]
In particular, in one spatial dimension, for a fixed cut one has
\[
p_1 \ge e^{-c-\beta\normop{\Delta H}}=O(1),
\]
uniformly in time.
\end{corollary}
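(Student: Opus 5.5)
The plan is to read the bound on $p_1$ off the Rényi-entropy bound in the previous corollary, by sending $\alpha\to\infty$. The only fact needed is the elementary relation between the largest eigenvalue and the Rényi entropies. For any $\alpha>1$,
\[
p_1^\alpha \le \sum_i p_i^\alpha = e^{(1-\alpha)S_\alpha},
\]
which gives $-\log p_1 \le \frac{\alpha-1}{\alpha}S_\alpha \le S_\alpha$. The left-hand side is the min-entropy $S_\infty$. So any upper bound on $S_\alpha$ immediately becomes a lower bound on $p_1$.

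First, I would insert the area-law corollary, namely $S_\alpha(\rho_{A,\beta,\theta}(t))\le cL_A^{d-1}+\frac{\alpha}{\alpha-1}\beta\normop{\Delta H}$. This yields
\[
-\log p_1 \le \frac{\alpha-1}{\alpha}\,cL_A^{d-1}+\beta\normop{\Delta H}.
\]
Here the factor $\frac{\alpha}{\alpha-1}$ in the overlap term cancels exactly against $\frac{\alpha-1}{\alpha}$. Since $\frac{\alpha-1}{\alpha}<1$, this is already at most $cL_A^{d-1}+\beta\normop{\Delta H}$ for every $\alpha>1$. Taking $\alpha\to\infty$ also gives this bound, and it is uniform in $t$. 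Exponentiating gives the stated inequality.

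As a cross-check, the bound also follows from a route that bypasses Rényi entropies entirely. Apply the von Neumann trace inequality as in the Lemma: $|\braket{\phi}{\psi}|\le\sum_i\sqrt{q_ip_i}\le\sqrt{p_1}\sum_i\sqrt{q_i}=\sqrt{p_1}\,e^{S_{1/2}(\rho_A^\phi)/2}$. Squaring and using Proposition~\ref{prop:overlapbound} together with Assumption~\ref{assume:shalfthermal} gives $p_1\ge e^{-cL_A^{d-1}-\beta\normop{\Delta H}}$ again.

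For the one-dimensional statement, I would note that $L_A^{d-1}=L_A^0=1$ when $d=1$. The bound then becomes $p_1\ge e^{-c-\beta\normop{\Delta H}}$, a constant independent of $L$ and $t$, because $\normop{\Delta H}=O(1)$ for a local quench. There is no real obstacle here; the only point to check is that the $\alpha$-dependent prefactors cancel, which they do.
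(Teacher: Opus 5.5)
Your main derivation has the key inequality backwards. From $p_1^\alpha \le \sum_i p_i^\alpha = e^{(1-\alpha)S_\alpha}$, taking logarithms gives $\alpha\log p_1 \le (1-\alpha)S_\alpha$, i.e.\ $-\log p_1 \ge \frac{\alpha-1}{\alpha}S_\alpha$. This is the standard \emph{lower} bound on the min-entropy, not an upper bound. The inequality you wrote cannot hold. For the maximally mixed state on $d$ levels it reads $\log d \le \frac{\alpha-1}{\alpha}\log d$, and letting $\alpha\to1^+$ it would force $p_1=1$ for every state.

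Consequently, the ``exact cancellation'' of $\frac{\alpha}{\alpha-1}$ against $\frac{\alpha-1}{\alpha}$, which you single out as the only point to check, is an artifact of the error. The intermediate bound $-\log p_1\le \frac{\alpha-1}{\alpha}cL_A^{d-1}+\beta\normop{\Delta H}$ is false. Sending $\alpha\to1^+$ would give $p_1\ge e^{-\beta\normop{\Delta H}}$. At $\theta=0$, where $\Delta H=0$, that would say the Gibbs purification $|\sqrt{\rho_\beta}\rangle$ is a product state across $A_sA_a|B_sB_a$.

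The correct upper bound on $-\log p_1$ comes from $\sum_i p_i^\alpha \le p_1^{\alpha-1}\sum_i p_i = p_1^{\alpha-1}$, which gives $S_\infty=-\log p_1 \le S_\alpha$ for $\alpha>1$. Inserting the preceding corollary yields $-\log p_1\le cL_A^{d-1}+\frac{\alpha}{\alpha-1}\beta\normop{\Delta H}$ for every $\alpha>1$. There is no cancellation, and you must take $\alpha\to\infty$ to reach the stated bound. This is exactly the paper's argument: monotonicity of $S_\alpha$ in $\alpha$ plus the $\alpha\to\infty$ limit.

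Your ``cross-check'' route, on the other hand, is entirely correct. The chain $|\braket{\phi}{\psi}|\le\sum_i\sqrt{q_ip_i}\le\sqrt{p_1}\,e^{S_{1/2}(\rho_A^\phi)/2}$, combined with Proposition~\ref{prop:overlapbound} (the overlap is real and at least $e^{-\beta\normop{\Delta H}/2}$) and Assumption~\ref{assume:shalfthermal}, gives the result in one line. It is in effect the $\alpha=\infty$ case of the paper's Lemma, with Jensen and Cauchy--Schwarz replaced by $p_i\le p_1$, and it avoids any limiting argument. You should make it the main proof and drop the first derivation.
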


\begin{proof}
For any density matrix, the Rényi entropies are monotone decreasing in $\alpha$, and
\[
S_\infty=-\log p_1.
\]
Therefore, taking the limit $\alpha\to\infty$ in the preceding corollary gives
\[
S_\infty(\rho_{A,\beta,\theta}(t))
\le
cL_A^{d-1}+\beta\normop{\Delta H}.
\]
Since $S_\infty(\rho_{A,\beta,\theta}(t))=-\log p_1$, this implies
\[
p_1 \ge \exp\!\left[-cL_A^{d-1}-\beta\normop{\Delta H}\right].
\]
In $d=1$, $L_A^{d-1}=1$, so for a fixed cut the lower bound is $O(1)$.
\end{proof}

\begin{remark}
The preceding corollary shows that in $d=1$ the Schmidt spectrum of $|\sqrt{\rho_{\beta,\theta}(t)}\rangle$ necessarily contains a time-independent $O(1)$ leading weight, consistent with the numerics discussed in the main text.
\end{remark}

\section{Time dependence of Rényi entropies} \label{sec:timedepend}

\begin{figure}[h]
\centering
\includegraphics[width=\textwidth]{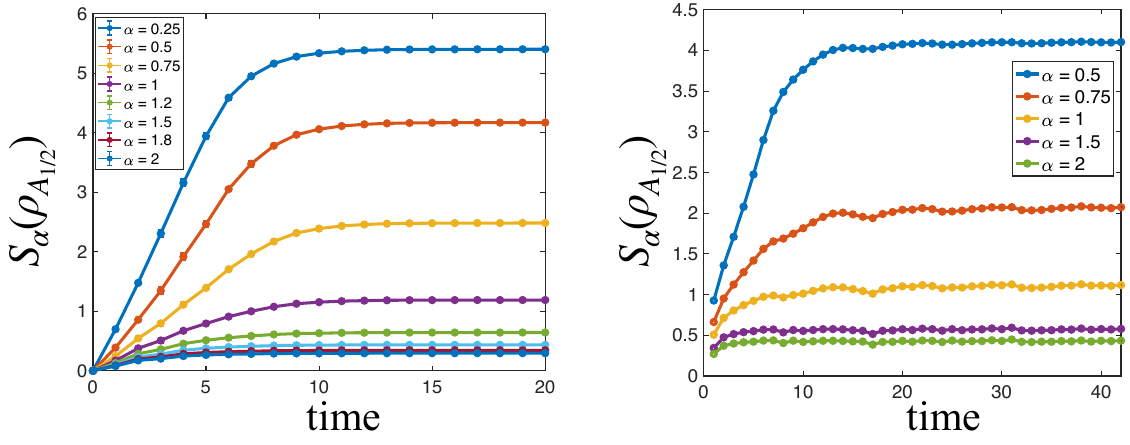}
\caption{(Left) Time dependence of the Rényi entropies for the state $\rho_{A_{1/2}}$ corresponding to the random circuit model at $\epsilon = 0.4$ (Eq.\ref{eq:randomcircuitstate}) for $L = 18, L_A = L/2$. The plot is for a single instance of the circuit. (Right) Time dependence of the Rényi entropies for the state $\rho_{A_{1/2}}$ corresponding to local quench on the Gibbs state of the mixed-field Ising model (Eq. \ref{eq:localquench}) at $g = 1.1, h = 0.35, \beta = 1, \theta = 0.5, L = 14, L_A = L/2$. }
\label{fig:Shalf_time_depend}
\end{figure}

Fig.\ref{fig:Shalf_time_depend} shows time dependence of the Rényi entropies of $\rho_{A_{1/2}}$ for the Haar-random circuit model as well as the locally quenched Gibbs state of the mixed-field Ising model discussed in the main text. In both cases, the entanglement grows monotonically before saturating (up to small fluctuations in the Gibbs case).

\section{Perturbative treatment of the locally quenched Gibbs state at high temperatures and arbitrary quench angle} 
\label{sec:smallbeta}

In this appendix we show that, in the regime $\beta \ll 1$ with $\theta = O(1)$, the canonical purification of the locally quenched Gibbs state reduces perturbatively to a product state across the cut plus a contribution from an operator that crosses the cut. This provides the precise connection to the circuit model discussed in the main text. We will only keep terms to linear order in $\beta$ to illustrate the idea.

We begin from
\begin{equation}
\rho_{\beta,\theta}(t)
=
\frac{e^{iHt}U_\theta^\dagger e^{-\beta H}U_\theta e^{-iHt}}{Z(\beta)},
\qquad
Z(\beta)=\tr(e^{-\beta H}),
\end{equation}
whose canonical purification is
\begin{equation}
|\sqrt{\rho_{\beta,\theta}(t)}\rangle
=
\frac{
e^{iHt}U_\theta^\dagger e^{-\beta H/2}U_\theta e^{-iHt}\otimes \Id_a
}{\sqrt{Z(\beta)}}|\Phi\rangle.
\end{equation}
Here $|\Phi\rangle = \sum_i |i\rangle_s |i\rangle_a$ is the unnormalized maximally entangled state between the system and ancilla. We again define
\begin{equation}
\Delta H:=U_\theta^\dagger H U_\theta - H,
\qquad
\Delta H(t):=e^{iHt}\Delta H\,e^{-iHt}.
\end{equation}
Then one has
\begin{equation}
e^{iHt}U_\theta^\dagger e^{-\beta H/2}U_\theta e^{-iHt}
=
e^{-\beta (H+\Delta H(t))/2}.
\end{equation}
Without loss of generality, we can assume $\tr(H)=0$ so that $Z(\beta)= |\mathcal{H}_S| +O(\beta^2)$ where $|\mathcal{H}_S|$ is the Hilbert-space dimension of the system. Therefore, perturbatively in $\beta$,
\begin{equation}
|\sqrt{\rho_{\beta,\theta}(t)}\rangle
=
\frac{|\Phi\rangle}{\sqrt{|\mathcal{H}_S|}}
-
\frac{\beta}{2\sqrt{|\mathcal{H}_S|}}(H+\Delta H(t))\otimes \Id_a\,|\Phi\rangle
+
O(\beta^2).
\label{eq:app_smallbeta_start}
\end{equation}

We now bipartition the doubled Hilbert space as $A_sA_a|B_sB_a$, with $A_s,B_s$ two contiguous halves of the physical system. Since $|\Phi\rangle$ factorizes across this cut,
\begin{equation}
|\Phi\rangle = |\Phi_A\rangle \otimes |\Phi_B\rangle,
\end{equation}
we also have
\begin{equation}
\frac{|\Phi\rangle}{\sqrt{|\mathcal{H}_S|}}
=
\frac{|\Phi_A\rangle}{\sqrt{|\mathcal{H}_{A_S}|}}
\otimes
\frac{|\Phi_B\rangle}{\sqrt{|\mathcal{H}_{B_S}|}},
\end{equation}
where $|\mathcal{H}_{A_S}|$ and $|\mathcal{H}_{B_S}|$ are the Hilbert-space dimensions of the system degrees of freedom in $A_s$ and $B_s$, respectively. It is useful to decompose the operator $H+\Delta H(t)$ into a part supported entirely in $A$, a part supported entirely in $B$, and a part with support on both sides of the cut:
\begin{equation}
H+\Delta H(t)=H_A(t)+H_B(t)+O_{\partial}(t).
\label{eq:app_decomp_operator}
\end{equation}
Here $H_A(t)$ and $H_B(t)$ contain all terms supported strictly within $A_s$ and $B_s$ respectively, while $O_{\partial}(t)$ contains all terms whose support intersects both sides of the cut.

A simple but important point is that the terms supported only in $A$ or only in $B$ do not generate new Schmidt structure at linear order in $\beta$. Indeed,
\begin{align}
\left[
1-\frac{\beta}{2}(H_A(t)+H_B(t))
\right]
\frac{|\Phi_A\rangle}{\sqrt{|\mathcal{H}_{A_S}|}}
\frac{|\Phi_B\rangle}{\sqrt{|\mathcal{H}_{B_S}|}}
&=
\frac{|\Phi_A\rangle}{\sqrt{|\mathcal{H}_{A_S}|}}
\frac{|\Phi_B\rangle}{\sqrt{|\mathcal{H}_{B_S}|}}
-\frac{\beta}{2}H_A(t)\frac{|\Phi_A\rangle}{\sqrt{|\mathcal{H}_{A_S}|}}
\frac{|\Phi_B\rangle}{\sqrt{|\mathcal{H}_{B_S}|}}
\nonumber\\
&\quad
-\frac{\beta}{2}
\frac{|\Phi_A\rangle}{\sqrt{|\mathcal{H}_{A_S}|}}
H_B(t)\frac{|\Phi_B\rangle}{\sqrt{|\mathcal{H}_{B_S}|}}
\nonumber\\
&=
\left(1-\frac{\beta}{2}H_A(t)\right)
\frac{|\Phi_A\rangle}{\sqrt{|\mathcal{H}_{A_S}|}}
\otimes
\left(1-\frac{\beta}{2}H_B(t)\right)
\frac{|\Phi_B\rangle}{\sqrt{|\mathcal{H}_{B_S}|}}
+O(\beta^2).
\label{eq:app_factorization}
\end{align}
Thus, up to corrections of order $\beta^2$, the action of $H_A(t)+H_B(t)$ merely dresses the product state $|\Phi_A\rangle |\Phi_B\rangle$.

Using Eqs.~\eqref{eq:app_smallbeta_start}--\eqref{eq:app_factorization}, one obtains
\begin{equation}
|\sqrt{\rho_{\beta,\theta}(t)}\rangle
=
|\phi_A(t)\rangle \otimes |\phi_B(t)\rangle
-
\frac{\beta}{2}O_{\partial}(t)\otimes \Id_a\,\frac{|\Phi\rangle}{\sqrt{|\mathcal{H}_S|}}
+
O(\beta^2),
\label{eq:app_main_formv1}
\end{equation}
where
\begin{equation}
|\phi_A(t)\rangle :=
\left(1-\frac{\beta}{2}H_A(t)\right)
\frac{|\Phi_A\rangle}{\sqrt{|\mathcal{H}_{A_S}|}},
\qquad
|\phi_B(t)\rangle :=
\left(1-\frac{\beta}{2}H_B(t)\right)
\frac{|\Phi_B\rangle}{\sqrt{|\mathcal{H}_{B_S}|}}.
\end{equation}
Since 
\[
\frac{|\Phi\rangle}{\sqrt{|\mathcal{H}_S|}}
=
|\phi_A(t)\rangle \otimes |\phi_B(t)\rangle
+
O(\beta),
\]
one may also write,
\begin{equation}
|\sqrt{\rho_{\beta,\theta}(t)}\rangle
=
\left(\Id -
\frac{\beta}{2}O_{\partial}(t)\otimes \Id_a\,\right)|\phi_A(t)\rangle \otimes |\phi_B(t)\rangle
+
O(\beta^2).
\label{eq:app_main_formv2}
\end{equation}

Eq.~\eqref{eq:app_main_formv2} is the precise analogue of the circuit-model state discussed in the main text, Eq.\ref{eq:randomcircuitstate}. The only difference is that the time-independent reference state $|0\rangle$ is replaced by the dressed, time-dependent product state $|\phi_A(t)\rangle \otimes |\phi_B(t)\rangle$, which differs from it only by terms supported locally within $A$ and $B$.

We now follow the same idea used in the circuit model in the main text and write
\begin{equation}
O_{\partial}(t)\otimes \Id_a\,\frac{|\Phi\rangle}{\sqrt{|\mathcal{H}_S|}}
=
|x(t)\rangle_A \otimes \frac{|\Phi_B\rangle}{\sqrt{|\mathcal{H}_{B_S}|}}
+
|y(t)\rangle,
\qquad
\left(
\Id_A\otimes \frac{\langle \Phi_B|}{\sqrt{|\mathcal{H}_{B_S}|}}
\right)
|y(t)\rangle = 0.
\label{eq:app_xydecomp_time}
\end{equation}
Since $|\phi_B(t)\rangle=|\Phi_B\rangle/\sqrt{|\mathcal{H}_{B_S}|}+O(\beta)$, substituting Eq.~\eqref{eq:app_xydecomp_time} into Eq.~\eqref{eq:app_main_formv2} gives, up to $O(\beta^2)$ corrections,
\begin{equation}
|\sqrt{\rho_{\beta,\theta}(t)}\rangle
=
\left(|\phi_A(t)\rangle-\frac{\beta}{2}|x(t)\rangle_A\right)
\otimes |\phi_B(t)\rangle
-
\frac{\beta}{2}|y(t)\rangle
+
O(\beta^2).
\end{equation}
Tracing out region $B$ one obtains
\begin{equation}
\rho_A(t)
=
\tr_B |\sqrt{\rho_{\beta,\theta}(t)}\rangle\langle \sqrt{\rho_{\beta,\theta}(t)}|
=
|v(t)\rangle\langle v(t)|
+ O(\beta^2),
\label{eq:app_rhoA_spikecloud}
\end{equation}
where
\begin{equation}
|v(t)\rangle
=
|\phi_A(t)\rangle - \frac{\beta}{2}|x(t)\rangle_A + O(\beta^2).
\end{equation}

Let's write 

\begin{equation}
\rho_A(t)=|v(t)\rangle\langle v(t)|+R_A(t),
\qquad
\|R_A(t)\|_1=O(\beta^2).
\label{eq:app_smallbeta_trace_norm}
\end{equation}

To make entropy-density statements we now assume that $\|R_A(t)\|_1\le C(t)\beta^2$ where $C(t)$ is independent of the subsystem size. This implies that  $T_A(t):=\frac12\|\rho_A(t)-|v(t)\rangle\langle v(t)|\|_1 \le \frac{C(t)}{2}\beta^2$. The Fannes--Audenaert continuity bound~\cite{fannes1973,audenaert2007sharp} then implies 
\begin{equation}
S_1(\rho_A(t))
\le
T_A(t)\log(\dim\mathcal H_A-1)+h_2(T_A(t)),
\label{eq:app_smallbeta_fannes}
\end{equation}
where $h_2(x)=-x\log x-(1-x)\log(1-x)$. Since the doubled region $A=A_sA_a$ has local Hilbert-space dimension $q^2$ per physical site, $\log\dim\mathcal H_A=2|A_s|\log q$. Therefore, the volume law coefficient of $S_1$ is bounded as

\begin{equation}
s_1 = \lim_{|A_s|\to\infty}
\frac{S_1(\rho_A(t))}{|A_s|}
\le
C(t)\beta^2\log q.
\label{eq:app_smallbeta_entropy_density_bound}
\end{equation}
Thus the volume-law coefficient of the top-level von Neumann entropy vanishes at least quadratically in $\beta$.

\section{Perturbative treatment of the locally quenched Gibbs state at small quench angle and $\beta = O(1)$} 
\label{sec:smalltheta}

In this appendix we consider the regime $\theta \ll 1$ at fixed $\beta = O(1)$. We show that, upon approximating the Gibbs canonical purification by a finite-bond-dimension state, the locally quenched purification takes a form directly analogous to the circuit-model state discussed in the main text.

Let
\begin{equation}
U_\theta = e^{i\theta Q_x},
\end{equation}
where $Q_x$ is a local Hermitian operator. The canonical purification of the quenched Gibbs state is
\begin{equation}
|\sqrt{\rho_{\beta,\theta}(t)}\rangle
=
\frac{
e^{iHt}U_\theta^\dagger e^{-\beta H/2}U_\theta e^{-iHt}\otimes \Id_a
}{\sqrt{Z(\beta)}}|\Phi\rangle.
\end{equation}
Expanding to linear order in $\theta$,
\begin{equation}
U_\theta^\dagger e^{-\beta H/2}U_\theta
=
e^{-\beta H/2}
+
i\theta [e^{-\beta H/2},Q_x]
+
O(\theta^2).
\end{equation}
Since $e^{-\beta H/2}$ commutes with $H$, this implies
\begin{equation}
|\sqrt{\rho_{\beta,\theta}(t)}\rangle
=
|\sqrt{\rho_\beta}\rangle
+
\frac{i\theta}{\sqrt{Z(\beta)}}\ket{[e^{-\beta H/2},Q_x(t)]}
+
O(\theta^2),
\label{eq:app_smalltheta_start}
\end{equation}
where
\begin{equation}
Q_x(t)=e^{iHt}Q_x e^{-iHt}.
\end{equation}
Equivalently, using the vectorization identities
\begin{equation}
\ket{AQ}=(\Id\otimes Q^T)\ket{A},
\qquad
\ket{QA}=(Q\otimes \Id)\ket{A},
\end{equation}
Eq.~\eqref{eq:app_smalltheta_start} may be written as
\begin{equation}
|\sqrt{\rho_{\beta,\theta}(t)}\rangle
=
|\sqrt{\rho_\beta}\rangle
+
i\theta\,
\bigl(\Id\otimes Q_x(t)^T-Q_x(t)\otimes \Id\bigr)
|\sqrt{\rho_\beta}\rangle
+
O(\theta^2).
\label{eq:app_smalltheta_exact}
\end{equation}

We now assume that the Gibbs canonical purification is well approximated by a finite-bond-dimension state across the cut $A_sA_a|B_sB_a$:
\begin{equation}
|\sqrt{\rho_\beta}\rangle
=
\sum_{i=1}^{D}\sqrt{p_i}\,|\Phi_A^i\rangle |\Phi_B^i\rangle,
\label{eq:app_thermal_approx}
\end{equation}
with $D=O(1)$.

Similar to the circuit model in the main text, we now decompose the term linear in $\theta$ in Eq.~\eqref{eq:app_smalltheta_exact} into the part that remains inside the Schmidt sector of $|\sqrt{\rho_\beta}\rangle$ on the $B$ side, and the orthogonal remainder:
\begin{equation}
\bigl(\Id\otimes Q_x(t)^T-Q_x(t)\otimes \Id\bigr)\sum_{i=1}^{D}\sqrt{p_i}\,|\Phi_A^i\rangle |\Phi_B^i\rangle
=
\sum_{i=1}^{D}|x_A^i(t)\rangle |\Phi_B^i\rangle
+
|y(t)\rangle,
\label{eq:app_smalltheta_decomp}
\end{equation}
with $\langle \Phi_B^i|y(t)\rangle=0$
for all $i$. Substituting Eq.~\eqref{eq:app_smalltheta_decomp} into Eq.~\eqref{eq:app_smalltheta_exact}, we find
\begin{equation}
|\sqrt{\rho_{\beta,\theta}(t)}\rangle
=
\sum_{i=1}^{D}
\left(
\sqrt{p_i}\,|\Phi_A^i\rangle+i\theta |x_A^i(t)\rangle
\right)|\Phi_B^i\rangle
+i\theta |y(t)\rangle
+
O(\theta^2).
\label{eq:app_smalltheta_finalstate}
\end{equation}

Since $|y(t)\rangle$ is orthogonal to all  $|\Phi_B^i\rangle$, tracing out region $B$ leads to the following expression for the density matrix on $A$:
\begin{equation}
\rho_A(t)
=
\tr_B |\sqrt{\rho_{\beta,\theta}(t)}\rangle\langle \sqrt{\rho_{\beta,\theta}(t)}|
=
\sum_{i=1}^{D}
\left(
\sqrt{p_i}\,|\Phi_A^i\rangle+i\theta |x_A^i(t)\rangle
\right)
\left(
\sqrt{p_i}\,\langle \Phi_A^i|-i\theta \langle x_A^i(t)|
\right)
+
O(\theta^2).
\label{eq:app_smalltheta_finalrhoA}
\end{equation}

Thus, to linear order in $\theta$, the reduced density matrix is controlled entirely by the same number of $O(1)$ Schmidt states as the finite-bond-dimension approximation to the thermal purification. In this sense, the small-$\theta$ expansion has the same structure as the circuit-model state discussed in the main text, with the one-dimensional sector replaced by the $D$-dimensional Schmidt sector of $|\sqrt{\rho_\beta}\rangle$.

Following the same argument as in Appendix~\ref{sec:smallbeta}, if one again assumes that the one-norm of the $O(\theta^2)$ correction in Eq.~\eqref{eq:app_smalltheta_finalrhoA} is uniformly bounded by a system-size independent constant, then  the Fannes--Audenaert continuity bound~\cite{fannes1973,audenaert2007sharp}  implies that the volume law coefficient of $S_1(\rho_A)$ vanishes at least quadratically in $\theta$.

\section{Numerical results in the small $\beta, \theta$ regime for the canonical purification $|\sqrt{\rho_{\beta,\theta}(t)}\rangle$} 

\subsection{Scaling of volume law coefficient of von Neumann entropy at small $\beta, \theta$} \label{sec:volume-law-coeff}

\begin{figure}[h]
\centering
\includegraphics[width=\textwidth]{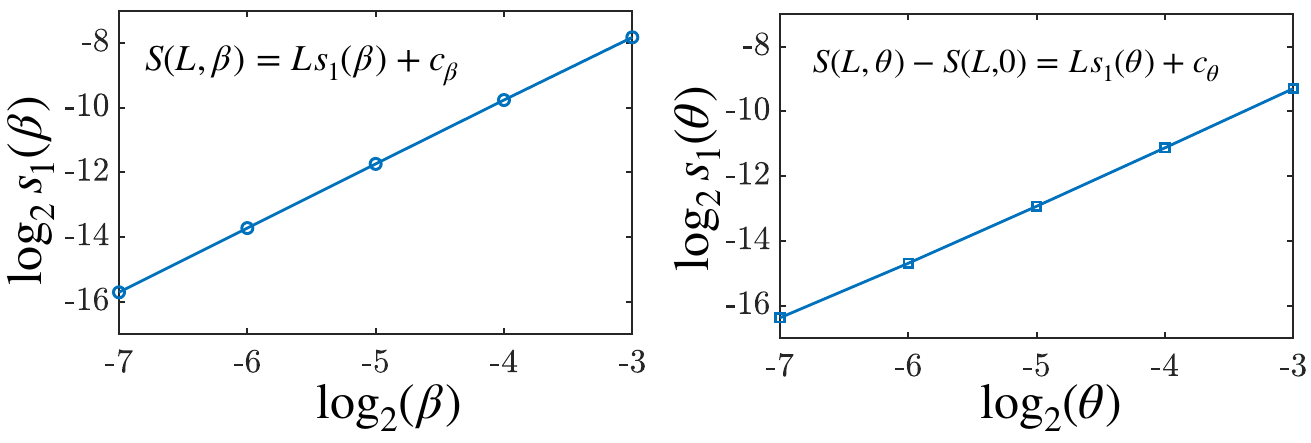}
\caption{(Left)  Scaling of the volume-law coefficient of the bipartite von Neumann entropy corresponding to a half-half bipartition of the canonical purification $|\sqrt{\rho_{\beta,\theta}(t)}\rangle$ at time $t = L^2$ when $S_1$ has plateaued to its maximum value. Here $\theta = 1$. (Right)   Scaling of the volume-law coefficient of the bipartite von Neumann entropy corresponding to a half-half bipartition of the canonical purification $|\sqrt{\rho_{\beta,\theta}(t)}\rangle$ at time $t = L^2$ when $S_1$ has plateaued to its maximum value. Here $\beta = 0.8$.}
\label{fig:s1_slope_smallbetatheta}
\end{figure}

To extract the volume-law coefficient of the saturated von Neumann entropy of  $|\sqrt{\rho_{\beta,\theta}(t)}\rangle$, we performed a standard least-squares fit in system size at each fixed value of the scanned parameter. For the small $\beta$ regime with $\theta = O(1)$, the data points \(S_1(L,\beta)\) at \(L=8,10,12,14\) were fit to
\[
S_1(L,\beta)=L\,s_1(\beta)+c_\beta,
\]
where our main object of  interest is the dependence of \(s_1(\beta)\), the entropy density, on $\beta$. For the small $\theta$ regime, we first subtracted the \(\theta=0\) (static) Gibbs contribution at the same system size and fit the excess entropy according to
\[
S_1(L,\theta)-S_1(L,0)=L\,s_1(\theta)+c_\theta,
\]

again for  \(L=8,10,12,14\). Fig.\ref{fig:s1_slope_smallbetatheta} (a) shows \(\log_2 s_1(\beta)\) versus \(\log_2 \beta\), and  Fig.\ref{fig:s1_slope_smallbetatheta} (b) shows \(\log_2 s_1(\theta)\) versus \(\log_2 \theta\). We find $s_1(\beta) \sim \beta^{1.97}$ and $s_1(\theta) \sim \theta^{1.77}$. The $\beta$ exponent is very close to the expected quadratic scaling, while the one for $\theta$ is broadly consistent with \(s_1(\theta)\sim\theta^2\), with the deviation plausibly due to finite-size effects.

\subsection{Scaling of truncation error when both $\beta, \theta$ are small} \label{sec:beta_theta_small}

\begin{figure}[h]
\centering
\includegraphics[width=\textwidth]{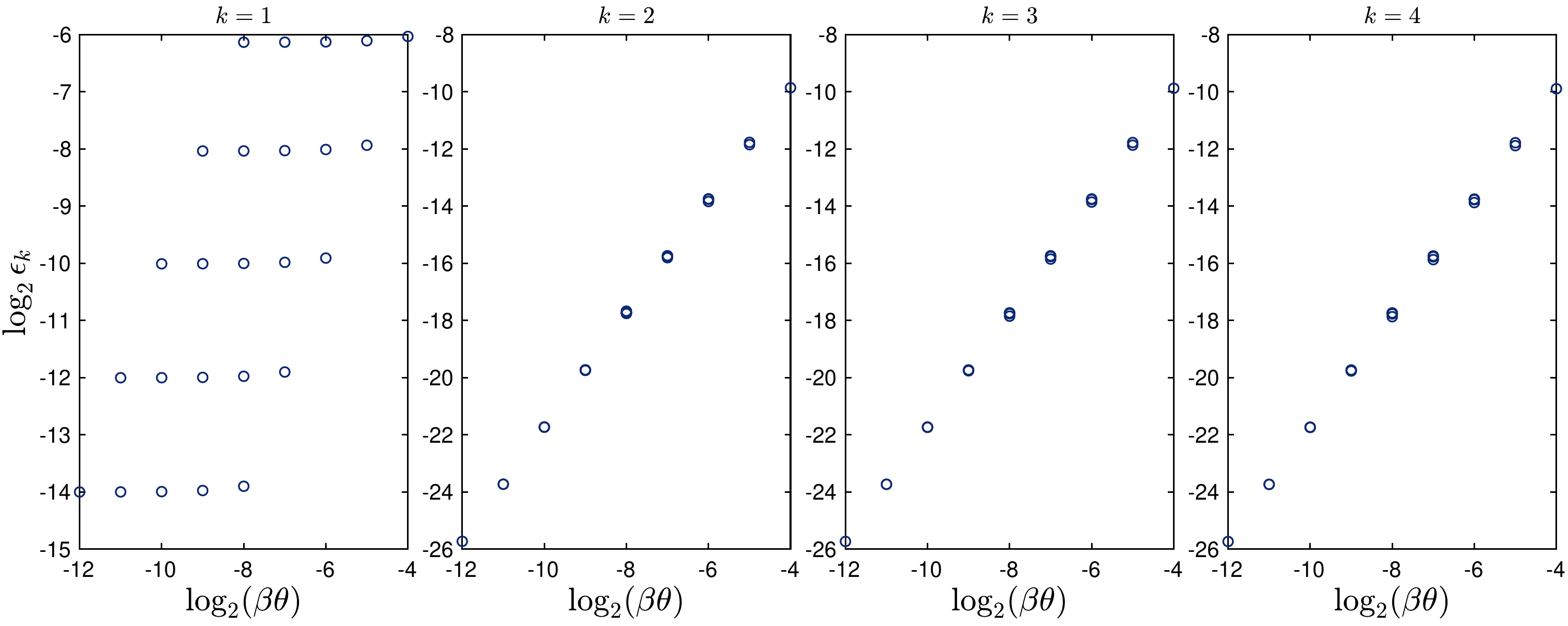}
\caption{Truncation error when $\beta, \theta \ll 1$ for $L=12$. The data are taken over a $5\times 5$ grid $\beta,\theta\in\{2^{-2},2^{-3},2^{-4},2^{-5},2^{-6}\}$. We plot $\log_2 \epsilon_k$ versus $\log_2(\beta\theta)$ for $k=1,2,3,4$, where $\epsilon_k=1-\sum_{j=1}^k p_j$ and $p_j$ are the Schmidt weights across the $A_sA_a|B_sB_a$ cut. For $k>1$, the data collapse onto straight lines with slope $\approx 2.0$, consistent with theoretical expectation $\epsilon_k\propto(\beta\theta)^2$.}
\label{fig:data_collapse}
\end{figure}

Fig.\ref{fig:z_string}(b) shows that when $\beta \ll 1$ and $\theta = O(1)$, the truncation error vanishes as $\beta^2$, and similarly, when $\theta \ll 1$ and $\beta = O(1)$, the truncation error vanishes as $\theta^2$. For completeness, we now show data when both $\beta, \theta \ll 1$. We fix the system size to $L = 12$. For each pair $(\beta,\theta)$, we construct $|\sqrt{\rho_{\beta,\theta}(t)}\rangle$
and compute its Schmidt weights $\{p_j\}$ across the spatial bipartition $A_sA_a|B_sB_a$. The truncation error after keeping the largest $k$ Schmidt states is $\epsilon_k = 1-\sum_{j=1}^{k}p_j $. We scan a grid of small values of $\beta$ and $\theta$ and plot $\epsilon_k$ as a function of the product $\beta\theta$, see Fig.~\ref{fig:data_collapse}. The resulting data show a clear collapse for $k>1$:
\[
\epsilon_k \sim C_k(\beta\theta)^2,
\]
consistent with our expectation.

\end{document}